\theoremstyle{plain}
\begin{document}

\title[A Theory for Valiant's Matchcircuits]{A Theory for Valiant's Matchcircuits\\
  (Extended Abstract)}

\author[lab1]{Angsheng Li}{Angsheng Li}
\address[lab1]{State Key Laboratory of Computer Science,
\newline  Institute of Software, Chinese Academy of Sciences,
\newline P.O.\ Box 8717, Beijing 100080, China}  
\email{angsheng@ios.ac.cn}  
\email{xmjljx@gmail.com} 

\author[lab1]{Mingji Xia}{Mingji Xia}

\thanks{Both authors are supported by the NSFC Grant
 no.\ 60325206 and no.\ 60310213. The second author is also
supported by MATHLOGAPS (MEST-CT-2004-504029).} 

\keywords{Pfaffian,Matchgate, Matchcircuit} \subjclass{F.1.1}


\begin{abstract}
The computational function of a matchgate is represented by its
character matrix. In this article, we show that all nonsingular
character matrices are closed under matrix inverse operation, so
that for every $k$, the nonsingular character matrices of $k$-bit
matchgates form a group, extending the recent work of Cai and
Choudhary \cite{cc06c} of the same result for the case of $k=2$, and
that the single and the two-bit matchgates are universal for
matchcircuits, answering a question of Valiant \cite{val02}.
\end{abstract}

\maketitle

\stacsheading{2008}{491-502}{Bordeaux}
\firstpageno{491}

\section{Introduction}
Valiant \cite{val02} introduced the notion of  matchgate and
matchcircuit as a new model of computation to simulate quantum
circuits, and successfully realized a significant part of quantum
circuits by using this new model.  Valiant's new method organizes
certain computations based on the graph theoretic notion of perfect
matching and the corresponding algebraic object of the Pfaffian.
This leaves an interesting open question of characterizing the exact
power of the matchcircuits. To solve these problems, a significant
first step would be a better understanding the structures of the
matchgates and the matchcircuits, to which the present paper is
devoted.

In \cite{val04}, Valiant introduced the notion of {\it holographic
algorithm}, based on matchgates and their properties, but with some
additional ingredients of the choice of a set of linear basis
vectors, through which the computation is expressed and interpreted.

Matchgates and their character matrices have some nice properties,
which have already been  extensively studied.  In \cite{cc06c}, Cai
and Choudhary showed that  a matrix is the character matrix of a
matchgate if and only if it satisfies all the useful
Grassmann-Pl\"ucker identities, and all nonsingular character
matrices of two bits matchgates form a group.

 In the present paper, we show that for every $k$, all the
 nonsingular character matrices of
 $k$-bit matchgates
 form a group, extending the result of Cai and
Choudhary of the same
 result for the
case of $k=2$.

Furthermore, we show that every matchcircuit based on $k$-bit
matchgates for $k>2$ can be realized by a series of compositions of
either single bit or two bits matchgates. This result answers a
question raised by Valiant in \cite{val02}. The result is an analogy
of the quantum circuits in the matchcircuits \cite{qut}.

We organize the paper as follows. In section
\ref{sec:2,definitions}, we outline necessary definitions and
background of the topic. In section \ref{sec:tranformation T1-T4},
we state our results, and give some overview of the proofs. In
section \ref{sec4}, we establish our first result that for every
$k$, all nonsingular $k$-bit character matrices form a group. In
section \ref{sec5}, we prove the second result that level $2$
matchgates are universal for matchcircuits.

\section{Definitions} \label{sec:2,definitions}

\subsection{Graphs and Pfaffian}

Let $G=(V,E,W)$ be a weighted undirected graph, where
$V=\{1,2,\ldots,n\}$ is the set of vertices each represented by a
distinct positive integer, $E$ is the set of edges and $W$ is the
set of weights of the edges. We represent the graph by a
skew-symmetric matrix $M$, called the {\it skew-symmetric adjacency
matrix} of $G$, where $M(i,j)=w(i,j)$ if $i<j$, $M(i,j)=-w(i,j)$ if
$i>j$, and $M(i,i)=0$.

The {\it Pfaffian} of an $n\times n$ skew-symmetric matrix $M$ is
defined to be $0$ if $n$ is odd, $1$ if $n$ is $0$, and if $n=2k$
where $k>0$ then it is defined by
\begin{displaymath}
{\textrm Pf}(M)=\sum _\pi \epsilon_\pi w(i_1,i_2)w(i_3,i_4) \ldots
w(i_{2k-1},i_{2k}),
\end{displaymath}
where
\begin{itemize}
\item $\pi =[i_1,i_2,\ldots, i_{2k}]$  is a permutation on
$[1,2,\ldots, n]$,

\item the summation is over all permutations $\pi$, where
$i_1<i_2,i_3<i_4,\ldots,i_{2k-1}<i_{2k}$ and $i_1<i_3<\ldots
<i_{2k-1}$,

\item $\epsilon_\pi$ is the sign of the permutation $\pi$, or
equivalently,  $\epsilon_\pi$ is the sign or parity of the number of
overlapping pairs, where a pair of edges
$(i_{2r-1},i_{2r}),(i_{2s-1},i_{2s})$ is {\it overlapping} iff
$i_{2r-1}<i_{2s-1}<i_{2r}<i_{2s}$ or
$i_{2s-1}<i_{2r-1}<i_{2s}<i_{2r}$.
\end{itemize}

A {\it matching} is a subset of edges such that no two edges share a
common vertex. A vertex is said to be {\it saturated} if there is a
matching edge incident to it. A {\it perfect matching} is a matching
which saturates all vertices. There is a one-to-one correspondence
between the monomials in the Pfaffian and the perfect matchings in
$G$.

If $M$ is an $n\times n$ matrix and $A=\{i_i,\ldots ,i_r\} \subseteq
\{1,\ldots ,n\}$, then $M[A]$ denotes the matrix obtained from $M$
by deleting the rows and columns of indices in $A$. The {\it
Pfaffian Sum} of $M$ is a polynomial over indeterminates
$\lambda_1,\lambda_2,\ldots,\lambda_n$ defined by
\begin{displaymath}
\textrm{PfS}(M)=\sum_A(\prod_{i\in A} \lambda_i)\textrm {Pf} (M[A])
\end{displaymath}
where the summation is over the $2^n$ subsets of $\{1,\ldots ,n\}$.
There is a one-one correspondence between the terms of the Pfaffian
sum and the matchings in $G$. We consider only instances such that
each $\lambda_i$ is fixed to be 0 or 1. In this case, Pfaffian Sum
is a summation over all matchings that match all nodes with
$\lambda_i=0$. It is well known that both the Pfaffian and the
Pfaffian Sum are computable in polynomial time.

\subsection{Matchgate}

A {\it matchgate} $\Gamma$, is a quadruple $(G,X,Y,T)$, where
$G=(V,E,W)$ is a graph, $X\subseteq V$ is a set of input nodes,
$Y\subseteq V$ is a set of output nodes, and $T \subseteq V$ is  a
set of {\it omittable nodes} such that $X$, $Y$ and $T$ are pairwise
disjoint. Usually the numbers of nodes in $V$ are consecutive from 1
to $n=|V|$ and $X$, $Y$ have minimal and maximal numbers
respectively. Whenever we refer to the Pfaffian Sum of a matchgate
fragment, we assume that $\lambda_i=1$, if $i\in T$, and $0$
otherwise. Each  node in $X\cup Y$ is assumed to have exactly one
incident {\it external edge}. For a node in $X$, the other end of
the external edge is assumed to have index less than the index for
any node in $V$, and for a node in $Y$, the other end node has index
greater than that for every node in $V$. If $k=|X|=|Y|$, then
$\Gamma$ is called {\it $k$-bit matchgate}. A matchgate is called a
{\it level $k$ matchgate}, if it is an $n$-bit matchgate for some
$n\leq k$. If a matchgate only contains input nodes, output nodes
and one ommitable node, then it is called a {\it standard
matchgate}.

We define, for every $Z \subseteq X \cup Y$, the {\it character
$\chi(\Gamma,Z)$ of $\Gamma$ with respect to $Z$}  to be the value
$\mu (\Gamma,Z)\textrm{PfS}(G-Z)$, where $G-Z$ is the graph obtained
from $G$ by deleting the vertices in $Z$ together with their
incident edges, and the {\it modifier} $\mu (\Gamma,Z)\in \{-1,1\}$
counts the parity of the number of overlaps between matched edges in
$G-Z$ and matched external edges. We assume that all the nodes in
$Z$ are matched externally. By definition of the modifier, it is
easy to verify that $\mu (\Gamma,Z)=\mu (\Gamma,Z\cap X)\mu
(\Gamma,Z\cap Y)$, and that if $X=\{1,2,\ldots ,k\}$ and $Z\cap
X=\{i_1,i_2,\ldots ,i_l\}$, then $\mu (\Gamma,Z\cap X)=
(-1)^{\sum_{j=1}^l {(i_j-j)}}$.

The {\it character matrix} $\chi (\Gamma)$ is defined to be the
$2^{|X|} \times 2^{|Y|}$ matrix such that entry $(i_1 i_2 \ldots
i_k, i_{n} i_{n-1} \ldots i_{n-k+1})$ is $\chi(\Gamma,X' \cup Y')$,
where $X'=\{ j \in X | i_j=1 \}$,  $Y'=\{ j \in Y | i_j=1 \}$ and
$i_1 i_2 \ldots i_k$, $i_{n} i_{n-1} \ldots i_{n-k+1}$ are binary
expression of numbers between 0 and $2^k-1$. We also use $(X', Y')$
to denote this entry. We call an entry  $(X',Y')$  {\it edge entry},
if $0<|(X-X')\cup (Y-Y')|\leq2$. Throughout the paper, we identify a
matchgate and its character matrix. An easy but useful fact is that
for every $k$, the $2^k \times 2^k$ unit matrix is a character
matrix.

\subsection{Properties of character matrix}

We introduce several properties of character matrices, which will be
used in the proof of our results.

\begin{thm}[\cite{val02}]\label{prop:closeundermutiple}
If $A$ and $B$ are character matrices of size $2^k \times 2^k$, then
$AB$ is a character matrix.
\end{thm}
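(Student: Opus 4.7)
The plan is to exhibit $AB$ as the character matrix of a composite matchgate obtained by physically wiring $\Gamma_A$ (realizing $A$) and $\Gamma_B$ (realizing $B$) together through their interfaces. Given $\Gamma_A=(G_A,X_A,Y_A,T_A)$ and $\Gamma_B=(G_B,X_B,Y_B,T_B)$ with $|Y_A|=|X_B|=k$, I form $\Gamma=(G,X_A,Y_B,T_A\cup T_B)$ where $G$ is the disjoint union of $G_A$ and $G_B$ together with $k$ new linking edges $e_i$ of weight $1$, connecting the $i$-th output node $a_i$ of $\Gamma_A$ to the $i$-th input node $b_i$ of $\Gamma_B$. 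The former interface nodes $a_i,b_i$ become ordinary internal (non-omittable) vertices of $\Gamma$, and the vertex indices are relabelled so that $X_A$ is a prefix and $Y_B$ is a suffix in the natural ordering.

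Next I would compute $\chi(\Gamma,X'\cup Y')$ for an arbitrary $X'\subseteq X_A$ and $Y'\subseteq Y_B$. Each term in $\textrm{PfS}(G-(X'\cup Y'))$ corresponds to a matching that saturates every non-omittable vertex outside $X'\cup Y'$. At each interface index $i$, there are two possibilities: either $e_i$ is used in the matching, in which case $a_i$ and $b_i$ are both consumed by $e_i$; or $e_i$ is unused, in which case $a_i$ is matched inside $G_A$ and $b_i$ is matched inside $G_B$. Classifying the terms by $S=\{i:e_i\text{ used}\}$, the matching splits as a pair of compatible matchings on the two sides, and because the linking edges have weight $1$, the weight of the composite term factors as the product of a weight on the $G_A$-side (with $\{a_i:i\in S\}$ removed) and one on the $G_B$-side (with $\{b_i:i\in S\}$ removed). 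Summing the factored terms and identifying them with Pfaffian Sums of the pieces, I obtain
\begin{displaymath}
\textrm{PfS}(G-(X'\cup Y'))=\sum_{S\subseteq\{1,\ldots,k\}}\textrm{PfS}(G_A-(X'\cup\{a_i:i\in S\}))\cdot\textrm{PfS}(G_B-(\{b_i:i\in S\}\cup Y')).
\end{displaymath}

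The delicate step will be sign bookkeeping, and I expect this to be the main obstacle. Two separate signs have to be reconciled. First, the Pfaffian sign $\epsilon_\pi$ on the composite side must match the product of the two Pfaffian signs on the pieces times the sign contributed by the permutation re-ordering that moves the linking edges $(a_i,b_i)$ into their canonical position; an appropriate choice of vertex indexing (for instance, placing all $G_A$-vertices before all $G_B$-vertices and pairing the $a_i,b_i$ consecutively) makes this contribution trivial or at worst a global sign independent of $S$. Second, the modifiers $\mu$ must be reconciled using the explicit formula $\mu(\Gamma,Z\cap X)=(-1)^{\sum_{j=1}^{l}(i_j-j)}$ and its analogue for outputs; I would verify that the two modifier contributions of $\Gamma_A$ and $\Gamma_B$, evaluated at $Y_A$-subset $S$ and $X_B$-subset $S$, combine to give exactly $\mu(\Gamma,X'\cup Y')$ times a sign that cancels the reordering sign from the Pfaffian.

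Once the sign verification is complete, the decomposition becomes
\begin{displaymath}
\chi(\Gamma,X'\cup Y')=\sum_{S}\chi(\Gamma_A,X'\cup S)\,\chi(\Gamma_B,S\cup Y')=\sum_S A_{X',S}B_{S,Y'}=(AB)_{X',Y'},
\end{displaymath}
so $AB=\chi(\Gamma)$ is a character matrix. The essential content is therefore a one-shot construction (the wired composition $\Gamma$) plus the sign calculus; the former is geometrically natural while the latter is the real work.
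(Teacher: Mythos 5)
The paper does not prove this theorem itself (it is imported from Valiant \cite{val02}), but your construction---linking $Y_A$ to $X_B$ by weight-$1$ edges, classifying matchings by the set $S$ of used linking edges, and letting the modifier $\mu$ absorb the overlap signs between linking edges and internal matched edges---is exactly the standard argument behind Valiant's result, and your decomposition of $\textrm{PfS}$ is correct. The only part left unexecuted is the sign verification you flag as ``the real work,'' which is indeed where all the content lies (note that the reordering sign depends on the individual matching, not merely on $S$, which is precisely why $\mu$ is defined as it is); carried out, it goes through as you expect.
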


\begin{thm}[\cite{val02}] \label{prop:oneommitable}
Given any matchgate $\Gamma$ there exists another matchgate
$\Gamma'$ that has the same character as $\Gamma$ and has an even
number of nodes, exactly one of which is omittable.
\end{thm}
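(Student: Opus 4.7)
The plan is to induct on $|T|$, designing a local gadget that reduces the number of omittable vertices by one while preserving the entire character matrix, and then cleaning up the parity of $|V|$ at the end.

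For the inductive step, suppose $|T|=m\ge 2$ and pick two omittable nodes $t_1,t_2\in T$. I would form $\Gamma'=(G',X,Y,T')$ by introducing three fresh internal vertices $t,a,b$, adjoining the five edges $(t,a),(t,b),(a,t_1),(b,t_2),(a,b)$ with weights $\pm 1$ to be fixed, and setting $T'=(T\setminus\{t_1,t_2\})\cup\{t\}$, so $|T'|=m-1$. Fix any $Z\subseteq X\cup Y$. I would expand $\mathrm{PfS}(G'-Z)$ by splitting each term according to whether the new omittable $t$ is omitted or not, and then by the forced matches of $a$ and $b$: either $t$ is present and matched to $a$ (forcing $b$ to $t_2$), or $t$ is present and matched to $b$ (forcing $a$ to $t_1$), or $t$ is omitted and $a$ is matched to $b$ (leaving $t_1,t_2$ to be matched inside $G-Z$), or $t$ is omitted and $a,b$ are matched to $t_1,t_2$ respectively. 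Summed over $B_1\subseteq T\setminus\{t_1,t_2\}$, these four families yield
\[
\mathrm{PfS}(G'-Z)\;=\;\sum_{B_1}\,\sum_{S\subseteq\{t_1,t_2\}}\pm\,\mathrm{Pf}\bigl(M[Z\cup B_1\cup S]\bigr),
\]
which equals $\mathrm{PfS}(G-Z)$ once the five edge weights are chosen so that each of the four signs is $+1$; this is four constraints on five Boolean variables and is easily satisfiable. Because $\{t,a,b\}$ is disjoint from $X\cup Y$, the modifier satisfies $\mu(\Gamma',Z)=\mu(\Gamma,Z)$, so $\chi(\Gamma',Z)=\chi(\Gamma,Z)$ for every $Z$. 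Iterating the gadget $m-1$ times, or, in the degenerate case $m=0$, adjoining a pendant edge $(t,a)$ with $t$ newly declared omittable and $a$ non-omittable (the forced edge $(t,a)$ contributes $\mathrm{Pf}(M)$ while $\mathrm{Pf}(M'[\{t\}])$ vanishes because $a$ is stranded), produces an equivalent matchgate with exactly one omittable node.

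Each reduction step enlarges $|V|$ by three, so after the whole procedure the total vertex count has a fixed parity. To make it even when needed, I would append one more pendant edge $(u,v)$ between two fresh non-omittable vertices having no other neighbors; the forced match $(u,v)$ multiplies every monomial of every $\mathrm{PfS}(G-Z)$ by the common factor $1$, so the character matrix is undisturbed. The main obstacle I anticipate is not the case analysis but the sign bookkeeping: the Pfaffian coefficients $\epsilon_\pi$ and the overlap parity recorded by $\mu$ depend on the vertex numbering, so the new vertices $t,a,b,u,v$ must be indexed consistently with the input/output convention stipulated in Section~\ref{sec:2,definitions}, and the five edge weights in the main gadget must be chosen in tandem with that indexing. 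Once the signs are arranged to line up, the induction closes.
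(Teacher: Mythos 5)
The paper imports Theorem~\ref{prop:oneommitable} from \cite{val02} without proof, so your argument has to stand on its own. Its combinatorial skeleton is sound: the four ways of matching the gadget $\{t,a,b\}$ do biject with the four subsets $S\subseteq\{t_1,t_2\}$ in the expansion of $\mathrm{PfS}(G-Z)$, and four sign constraints on five weights are always solvable over $\{\pm1\}$. The genuine gap is the step you set aside as ``sign bookkeeping'': for your system of constraints to be well posed at all, each of the four families must carry a \emph{single} sign, i.e.\ the discrepancy between $\epsilon_\pi$ computed in $G'$ and in $G$ must be constant over all matchings in that family. That discrepancy is the parity of the number of matched edges overlapping the gadget edges used, and for a generic choice of $t_1,t_2$ and a generic placement of $t,a,b$ it is \emph{not} constant: e.g.\ the edge $(a,b)$ used in the $S=\emptyset$ family overlaps a number of old edges equal (mod $2$) to the number of matched old vertices strictly between $a$ and $b$, which varies with $Z$ and with which omittable nodes of $B_1$ are left unmatched. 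Since the five weights are fixed once and for all while this sign varies matching by matching, no choice of weights cancels it. The gap is closable, but only by making specific choices you do not make: take $t_1<t_2$ to be \emph{consecutive} omittable nodes (no other omittable node, and by the input/output numbering convention no external node, lies between them, so every vertex strictly between them is saturated in every contributing matching), and insert $a$ immediately after $t_1$, $b$ immediately after $t_2$, and $t$ immediately after $b$; one then checks that each family's discrepancy is a constant depending only on $t_2-t_1$. This placement argument is the actual content of the proof, not an afterthought, and as written your proposal asserts its conclusion rather than establishing it.

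A second, smaller defect is the final parity repair. Appending a pendant pair of two fresh non-omittable vertices changes $|V|$ by $2$ and therefore cannot ``make it even when needed''; if after the merging phase $|V|$ is odd, your listed moves leave it odd forever (your main gadget changes $|V|$ by $3$ and $|T|$ by $-1$, so it preserves $|V|+|T|\bmod 2$, and a gate starting with, say, $|V|$ odd and $|T|=1$ can never reach the target $|V|$ even, $|T|=1$ by those moves alone). You need a parity-flipping move, e.g.\ adjoining an isolated omittable vertex or a pendant omittable vertex \emph{before} the last merge and then merging it away, rather than the non-omittable pendant pair at the end.
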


\begin{thm}[\cite{cc06c}] \label{prop:kehua}
Let $A$ be a $2^k \times 2^l$ matrix. Then $A$  is the character
matrix of a $k$-input, $l$-output matchgate, if and only if $A$
satisfies all the useful Grassmann-Pl\"ucker identities.
\end{thm}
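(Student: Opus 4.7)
The plan is to prove the two directions separately, with the backward direction carrying most of the technical weight. For the forward direction ($\Rightarrow$), I would begin from the classical fact that the Pfaffians of principal submatrices of any skew-symmetric matrix satisfy the Grassmann-Pl\"ucker relations. Each character entry $\chi(\Gamma,Z) = \mu(\Gamma,Z)\,\mathrm{PfS}(G-Z)$ is, after expanding the Pfaffian Sum over omittable subsets, a signed sum of Pfaffians of principal submatrices of the skew-symmetric adjacency matrix. Hence the underlying Pfaffian identities descend to identities among character entries, provided the modifier signs $\mu(\Gamma,Z)$ are tracked correctly. Using the decomposition $\mu(\Gamma,Z)=\mu(\Gamma,Z\cap X)\mu(\Gamma,Z\cap Y)$ together with the explicit formula $\mu(\Gamma,Z\cap X)=(-1)^{\sum_j(i_j-j)}$, I would verify that precisely the \emph{useful} subset of the Grassmann-Pl\"ucker identities is what survives the sign bookkeeping; the remaining identities become sign-inconsistent and must be discarded from the characterization.

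For the backward direction ($\Leftarrow$), I would aim for an explicit construction. By Theorem~\ref{prop:oneommitable}, it suffices to build a matchgate in standard form with one omittable node. The key observation is that the \emph{edge entries} $(X',Y')$, for which $0<|(X-X')\cup(Y-Y')|\leq 2$, carry exactly the data of a single edge weight in $G$ (after stripping the modifier). I would therefore read off a weighted graph $G$ directly from the edge entries of $A$: pairs of unmatched external vertices give edges among external nodes, and single unmatched external vertices (combined with the omittable node) give edges to the omittable node. Having fixed this graph, every other entry of the intended character matrix is a fully determined polynomial in the edge weights via the Pfaffian Sum formula. The remaining task is to show this determination agrees with $A$ itself.

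This is where I would invoke the useful Grassmann-Pl\"ucker identities. Inducting on $|(X-X')\cup(Y-Y')|$, the number of unmatched external vertices, I would use a suitably chosen identity to express each higher entry of $A$ as a polynomial in strictly smaller entries. The base case is exactly the edge-entry data used to define $G$. The inductive step on the Pfaffian Sum side is the classical Pfaffian expansion, and the inductive step on the $A$ side is the useful identity; matching the two then forces the character of the constructed matchgate to coincide with $A$ entry by entry.

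The main obstacle, in my view, is isolating the correct notion of ``useful'' identity and proving its sufficiency. Necessity is needed so that the construction is well-defined (the edge-entry weights must be consistent across the multiple edge entries that reference the same edge), while sufficiency is what lets the induction close without leaving residual entries unconstrained. Establishing that the non-useful Grassmann-Pl\"ucker relations are either vacuous or formal consequences of the useful ones, under the matchgate sign conventions, is the delicate combinatorial-algebraic heart of the argument; once this is in place, the rest reduces to careful sign accounting.
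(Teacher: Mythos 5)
You should first note that the paper contains no proof of this statement: Theorem~\ref{prop:kehua} is quoted from Cai and Choudhary \cite{cc06c}, so the comparison has to be made against the argument in that reference rather than against anything in this paper. Measured that way, your sketch follows essentially the same route as the known proof: the forward direction is the observation that each character entry is a signed (sum of) Pfaffian(s) of principal submatrices of one skew-symmetric matrix and therefore inherits the Grassmann--Pl\"ucker relations, and the backward direction reads the edge weights of a standard matchgate on $k+l+1$ nodes off the edge entries and then shows, by induction on the number of unmatched external nodes, that the useful identities force every remaining entry to agree with the Pfaffian Sum of the constructed graph. This is precisely the content the present paper extracts as Corollary~\ref{cor:standardmg}.

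Two points in your write-up need repair, one of them a genuine gap. The substantive one is normalization: your recipe for reading edge weights from edge entries implicitly anchors everything at the corner entry in which all external nodes are matched externally, and for a standard matchgate that entry is identically $1$ (the Pfaffian Sum of the single omittable node); this is exactly why Corollary~\ref{cor:standardmg} assumes the right-bottom most entry equals $1$. The theorem as stated makes no such assumption, so you must separately handle matrices whose corner entry is an arbitrary nonzero constant (which a $(k+l+1)$-node standard gate cannot produce) and, worse, matrices whose corner entry is $0$, where the edge entries relative to that corner no longer determine edge weights and the anchor must be relocated to some nonzero entry, with the identically-zero matrix treated separately. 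The second, minor, point is your description of the non-useful identities as ``sign-inconsistent and discarded'': for a character matrix the non-useful Grassmann--Pl\"ucker identities are the ones that hold vacuously (by parity the sub-Pfaffians they involve vanish or do not correspond to matrix entries), which is why only the useful ones appear in the characterization; nothing fails for sign reasons. Neither point derails the overall plan, but the zero-corner case is a step your induction cannot currently start from.
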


This is a very useful  characterization of the character matrices
generalizing the characterization for a major part of all 2-input
2-output matchgates in \cite{val02}. The proof of this theorem
implies the following:

\begin{cor}[\cite{cc06c}] \label{cor:standardmg}
Let $A$ be a  $2^k \times 2^l$ matrix whose right-bottom most entry
is $1$ satisfying all the useful Grassmann-Pl\"ucker identities.
Then $A$ is uniquely determined  by its edge entries and $A$ is the
character matrix of a standard matchgate $\Gamma$ containing $k+l+1$
nodes ($k$ input nodes, $l$ output nodes and 1 omittable node).
\end{cor}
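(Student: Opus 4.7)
The corollary has two parts: (i) existence of a standard matchgate with exactly $k+l+1$ nodes realising $A$, and (ii) uniqueness of $A$ given its edge entries plus the normalisation. The plan is to build a candidate standard matchgate $\Gamma$ explicitly from the edge entries of $A$ and then verify $\chi(\Gamma)=A$ by induction on the ``internal size'' $d(Z):=|(X-X')\cup(Y-Y')|$.

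For the construction, label the nodes so that $X=\{1,\ldots,k\}$, the omittable node $\omega=k+1$, and $Y=\{k+2,\ldots,k+l+1\}$. In a standard matchgate the Pfaffian Sum on $G-Z$ collapses to a single Pfaffian: the omittable node is included in the matching precisely when $d(Z)$ is odd. Consequently, up to the explicit sign $\mu(\Gamma,Z)$, the degree-$2$ edge entry whose complement is $\{i,j\}$ equals the weight $w(i,j)$, and the degree-$1$ edge entry whose complement is $\{i\}$ equals $w(i,\omega)$. Reading the edge entries of $A$ this way assigns a unique weight to every potential edge of the complete graph on $\{1,\ldots,k+l+1\}$, producing a standard matchgate $\Gamma$ whose character matrix already agrees with $A$ on the right-bottom entry and on all edge entries.

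To finish, I would prove $\chi(\Gamma)=A$ everywhere by induction on $d(Z)$. The cases $d(Z)\in\{0,1,2\}$ hold by construction. For $d(Z)\ge 3$, the entry $\chi(\Gamma,Z)$ is (up to $\mu$) a Pfaffian Sum on the induced subgraph $G-Z$, which the Pfaffian recursion expands as a signed sum of products of two strictly smaller-$d$ entries of $\chi(\Gamma)$. The identical relation must hold for $A$: by Theorem~\ref{prop:kehua}, $A$ satisfies every useful Grassmann--Pl\"ucker identity, and these identities translate to exactly the same signed combinations relating a degree-$d$ entry to strictly lower-degree entries. The inductive hypothesis forces the two sides to agree on the smaller entries, hence on the $Z$-entry as well. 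The same induction shows that \emph{any} matrix obeying the hypotheses and agreeing with $A$ on the edge entries must equal $A$, giving the uniqueness in~(ii).

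The main technical obstacle I anticipate is the sign bookkeeping: one must align the modifier $\mu(\Gamma,Z)$, the signs produced by the Pfaffian expansion on the induced submatrices, and the particular form in which the useful Grassmann--Pl\"ucker identities appear as relations among character-matrix entries. Once these conventions are reconciled and the inductive step is verified to match signs on both sides, the recursion closes uniformly in $d(Z)$ and yields simultaneously the existence of the standard matchgate and the asserted uniqueness of~$A$.
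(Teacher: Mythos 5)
Your construction-plus-induction argument is exactly the intended justification: the paper offers no proof of its own (it states that the corollary follows from the proof of Theorem~\ref{prop:kehua} in \cite{cc06c}), and that proof proceeds precisely as you describe — read the edge weights of a standard matchgate on $k+l+1$ nodes off the degree-$1$ and degree-$2$ edge entries, note that the Pfaffian Sum collapses to a single Pfaffian according to the parity of $d(Z)$, and recover all remaining entries by induction via the useful Grassmann--Pl\"ucker identities, which encode the Pfaffian expansion and require the normalised corner entry to close the recursion. The sign bookkeeping you flag is indeed the only delicate point, and it is handled in \cite{cc06c} by the explicit formula for the modifier $\mu$ given in Section~\ref{sec:2,definitions}.
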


Recently, Cai and Choudhary also showed that:

\begin{thm}[\cite{cc06c}] \label{prop:twobitinvertable}
Let $A$ be a $4\times 4 $ character matrix. If $A$ is invertible,
then $A^{-1}$ is a character matrix. Consequently, the nonsingular
$4 \times 4$ character matrices form a group.
\end{thm}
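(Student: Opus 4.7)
My plan is to combine the Grassmann--Pl\"ucker characterization in Theorem \ref{prop:kehua} with Cramer's rule. By Theorem \ref{prop:kehua}, to show that $A^{-1}$ is a character matrix it is enough to verify that $A^{-1}$ satisfies every useful Grassmann--Pl\"ucker identity. Writing $A^{-1}=(\det A)^{-1}\,\textrm{adj}(A)$, each entry of $A^{-1}$ is a specific polynomial of degree three in the entries of $A$ divided by the single nonzero scalar $\det A$. Substituting these expressions into any homogeneous Grassmann--Pl\"ucker identity on $A^{-1}$ and clearing the $\det A$ denominators should yield a polynomial relation in the entries of $A$ that must be derivable from the Grassmann--Pl\"ucker identities that $A$ itself satisfies.

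Concretely, I would first enumerate the useful Grassmann--Pl\"ucker identities for a $4\times 4$ character matrix; in the $2$-bit case these reduce to a short explicit list of homogeneous relations among the $16$ entries. A useful preliminary normalization, via Corollary \ref{cor:standardmg}, is to rescale $A$ so that its bottom-right entry equals $1$; since the identities are homogeneous, the general invertible case follows by scaling. Identity by identity, I would then substitute the cofactor formulas for the entries of $A^{-1}$ and attempt to express each resulting polynomial in the entries of $A$ as a sum of Grassmann--Pl\"ucker polynomials of $A$ multiplied by determinantal factors; if this pattern holds, each verification becomes immediate.

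An alternative, more conceptual route is to use Proposition \ref{prop:oneommitable} to realize $A$ as the character matrix of a standard matchgate $\Gamma$ with exactly five nodes, and then to construct a dual matchgate $\Gamma'$ whose character matrix equals $A^{-1}$ by combinatorially inverting $\Gamma$ (e.g.\ swapping inputs with outputs and reweighting edges using cofactors). The identity $\chi(\Gamma')\,A=I$ would then follow from Theorem \ref{prop:closeundermutiple} together with a direct Pfaffian computation. In either approach, the principal obstacle is the delicate sign bookkeeping coming from the overlap modifier $\mu(\Gamma,Z)$ and the parity conventions in the Grassmann--Pl\"ucker identities: the algebraic fact that $\textrm{adj}(A)$ is itself proportional to a character matrix is not visible on the surface. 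I expect the crux of the argument to be a reorganization of the $3\times 3$ cofactors of $A$ as Pfaffians (or Pfaffian sums) of an auxiliary skew-symmetric matrix, after which the Grassmann--Pl\"ucker structure transfers almost automatically and the group property follows by combining this with Theorem \ref{prop:closeundermutiple}.
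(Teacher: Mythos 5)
The theorem you were asked to prove is quoted in the paper from Cai and Choudhary \cite{cc06c} without proof, so there is no in-paper argument to compare line by line; the paper only remarks that its own inductive machinery (Lemmas \ref{lem:reductionlemma} and \ref{lem:changetoreducible}) yields an independent proof of the $k=2$ case. Measured on its own terms, your proposal is a plan rather than a proof: both routes you sketch stop exactly where the content of the theorem lives. In the first route, the entire difficulty is the claim that each useful Grassmann--Pl\"ucker polynomial, evaluated on the cofactor entries of $A$, lies in the ideal generated by the Grassmann--Pl\"ucker polynomials that $A$ satisfies. You flag this yourself (``if this pattern holds,'' ``I expect the crux to be\ldots''), but you neither enumerate the identities for the $4\times 4$ case nor exhibit a single such ideal membership, so nothing is actually established. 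The second route has the same status: ``combinatorially inverting $\Gamma$ by swapping inputs with outputs and reweighting edges using cofactors'' is not a construction until the weights are specified and $\chi(\Gamma')A=I$ is verified by an explicit Pfaffian computation, which is precisely where the sign bookkeeping you defer has to be carried out.

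There is also a concrete error: the normalization ``rescale $A$ so that its bottom-right entry equals $1$'' is impossible when that entry is zero, which can happen for an invertible character matrix. Corollary \ref{cor:standardmg} requires the bottom-right entry to already be $1$; to reach that configuration one must first move a nonzero entry into that position by composing with permutation-type matchgates (this is what Phase T1 of the paper's transformation does), not by scalar rescaling. If you want a complete argument within the resources of this paper, the cleanest path is the one the authors themselves indicate: run Lemma \ref{lem:changetoreducible} for $k=2$ to turn $A$ into a reducible character matrix $B$ by composing with invertible elementary gates, apply Lemma \ref{lem:reductionlemma} to drop to the $1$-bit case (which Valiant settled), and recover $A^{-1}$ from $B^{-1}$ via Theorem \ref{prop:closeundermutiple}. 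Your dual-matchgate idea is essentially Lemma \ref{lem:reductionlemma}(ii) in embryo, but it only works after the reduction to a reducible gate, which your proposal omits.
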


\subsection{Matchcircuit}
Given a matchgate $\Gamma=(G,X,Y,T)$, we say that it is {\it even},
if $\textrm {PfS}(G-Z)$  is zero whenever $Z=X \cup Y$ has odd size,
and {\it odd} if $\textrm {PfS}(G-Z)$  is zero whenever $|Z|$ is
even.

\begin{thm}[\cite{val02},\cite{cc06c}]\label{thm:valiantandcai}
Consider a matchcircuit $\Gamma$ composed of gates as in
\cite{val02}. Suppose that every gate is:
\begin{enumerate}
\item a gate with diagonal character matrix,
\item an even gate applied to consecutive bits $x_i, x_{i+1},\ldots , x_{i+j}$ for some $j$,
\item an odd gate applied to consecutive bits $x_i, x_{i+1},\ldots , x_{i+j}$ for some $j$, or
\item an arbitrary gate on bits $x_1,\ldots, x_j$ for some $j$.
\end{enumerate}
Suppose also that every parallel edge above any odd matchgate, if
any, has weight $-1$ and all other parallel edges have weight 1.
Then the character matrix of $\Gamma$ is the product of the
character matrices of the constituent matchgates, each extended to
as many inputs as those of $\Gamma$.
\end{thm}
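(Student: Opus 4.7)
The plan is to reduce the full statement to two composition lemmas: a \emph{series} lemma asserting that when a matchgate $\Gamma_1$ is followed by a matchgate $\Gamma_2$ (with the outputs of $\Gamma_1$ identified with the inputs of $\Gamma_2$ through the $n$ wires of the circuit), the character of the combined gate is the matrix product $\chi(\Gamma_1)\chi(\Gamma_2)$; and a \emph{parallel} lemma asserting that extending a $j$-bit gate $\Gamma$ acting on consecutive wires $x_i,\dots,x_{i+j-1}$ to a gate on all $n$ wires (by adding $n-j$ idle parallel edges of weight $1$ or $-1$) yields the character $I_{2^{i-1}}\otimes\chi(\Gamma)\otimes I_{2^{n-i-j+1}}$. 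Granted these two lemmas, an induction on the number of gates in the circuit immediately gives the theorem: each constituent gate, once extended to all $n$ wires, has character equal to its extension, and the composition of the extensions is realised by the series lemma.

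For the series lemma I would expand $\mathrm{PfS}$ of the combined graph by summing over which of the $n$ connecting wires are matched by the connecting parallel edge. Using Theorem~\ref{prop:oneommitable} to restrict to standard gates and then writing the resulting Pfaffian sum via the classical expansion of Pfaffians along a disjoint vertex set, the sum factors into $\chi(\Gamma_1,X_1\cup Z)\cdot\chi(\Gamma_2,Z\cup Y_2)$ summed over $Z\subseteq\{x_1,\dots,x_n\}$; after identifying the index $Z$ with an intermediate bit string, this is exactly the matrix product. The modifier $\mu$ for the combined gate has to be reconciled with $\mu(\Gamma_1,\cdot)\mu(\Gamma_2,\cdot)$; this is the usual overlap-parity bookkeeping and is treated exactly as in Valiant's original argument.

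For the parallel lemma, the extended matchgate consists of $\Gamma$ on wires $x_i,\dots,x_{i+j-1}$ together with $n-j$ parallel edges (the ``identity'' on the remaining wires). Its Pfaffian sum factors according to which parallel wires are matched by the parallel edge versus left to be connected externally. When all the other wires contribute through the identity, one reads off a tensor product $I\otimes\chi(\Gamma)\otimes I$; but the vertex ordering forces parallel edges to interleave with matched edges inside $\Gamma$, contributing $(-1)^r$ where $r$ is the number of parallel wires above $\Gamma$ that are matched inside the parallel-edge system \emph{and} whose passage crosses an odd number of matched edges of $\Gamma$. For a diagonal gate (case~1) nothing is matched across and the extension is trivial; for an even gate (case~2) every nonzero entry corresponds to an \emph{even} number of internal matched endpoints of $\Gamma$, so each parallel edge contributes a sign $(+1)^{\mathrm{even}}$; symmetrically for an odd gate (case~3), each parallel edge crossing $\Gamma$ picks up a $-1$, which is cancelled precisely by giving those parallel edges weight $-1$; case~4 needs no extension. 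Thus the signs collapse to $+1$ in every allowed case and the extended character is the claimed tensor product.

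The main obstacle is the sign bookkeeping in the parallel lemma, specifically verifying that the parity of overlaps between the parallel edges and the matched edges inside $\Gamma$ is determined \emph{entirely} by the even/odd parity of $\Gamma$. This is what forces the separation of cases (1)--(4) and what makes the $-1$ weights on parallel edges above odd gates both necessary and sufficient; the series lemma, by contrast, is a fairly mechanical application of the Pfaffian expansion once Theorem~\ref{prop:oneommitable} has normalised the gates.
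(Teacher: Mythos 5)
This theorem is imported background: the paper attributes it to \cite{val02} and \cite{cc06c} and gives no proof of its own, so there is nothing in the paper to compare against line by line. Your plan --- series composition via the Pfaffian expansion over intermediate matchings giving the matrix product, plus a parallel-extension lemma whose overlap-parity is controlled by the even/odd/diagonal/top-aligned case distinction and cancelled by the $-1$ weights above odd gates --- is exactly the standard argument of Valiant (as corrected and extended by Cai--Choudhary), and I see no wrong step in it, though the modifier and sign bookkeeping you defer is precisely where the real work lies.
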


From now on, whenever we say a matchcircuit, we mean that it
satisfying the requirements in the above theorem. An example circuit
is shown in Fig. \ref{Figs:p4}, where the edges in a matchgate are
not drown, and each node has index smaller than that of all nodes
located to the right of the node.
  We call a matchcircuit
is of {\it level $k$}, if it is composed of matchgates no more than
$k$ bits.

 \begin{figure}[h]
 \begin{center}
 \includegraphics[width=0.45\textwidth]{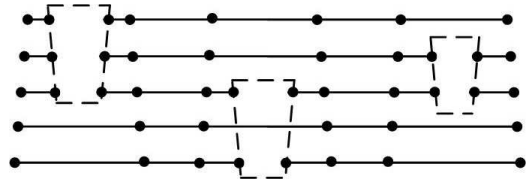}
 \caption{\label{Figs:p4}An example of matchcircuit.}
 \end{center}
 \end{figure}

 The {\it character matrix of a matchcircuit} is defined by the same way as that of  a matchgate except
that there is no modifier $\mu$.

\section{The results and overview of the proofs} \label{sec:tranformation T1-T4}

 \begin{thm} \label{thm:ourselfprop}
For every $k$, the nonsingular $2^k\times 2^k$ character matrices
form a group under the matrix multiplication.
 \end{thm}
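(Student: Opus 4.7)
The plan is to check the four group axioms for the set of nonsingular $2^k\times 2^k$ character matrices under matrix multiplication. The identity element is the $2^k\times 2^k$ unit matrix, which the excerpt already notes to be a character matrix; associativity is automatic; and closure under multiplication is exactly Theorem~\ref{prop:closeundermutiple}. So the entire content of the theorem is closure under inverses: if $A$ is a nonsingular character matrix of size $2^k\times 2^k$, then $A^{-1}$ is also a character matrix. The base case $k=2$ is already Theorem~\ref{prop:twobitinvertable}, so a natural target is an inductive argument that bootstraps from $k=2$ to arbitrary $k$.

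My approach would be to combine the Grassmann--Pl\"ucker characterization (Theorem~\ref{prop:kehua}) with an elimination strategy in the style of Gaussian elimination. Concretely, I would isolate a family of ``elementary'' character matrices $E_1,\ldots,E_r$ that are manifestly invertible within the class, and then show that any nonsingular $A$ can be driven to the identity by a sequence of left-multiplications $E_{i_m}\cdots E_{i_1}A = I$, which yields $A^{-1} = E_{i_m}\cdots E_{i_1}$, a character matrix by closure under products. Candidates for the elementary $E_i$ arise naturally from Corollary~\ref{cor:standardmg}: standard matchgates with very few nonzero edge entries -- such as bit-permutations, single-bit rescalings, or gates that act nontrivially on at most two adjacent bits -- are easy to write down, and their inverses are either of the same form or supplied directly by the $k=2$ result.

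The main obstacle is that the class of character matrices is very rigid: an arbitrary row operation will in general destroy the Grassmann--Pl\"ucker identities, so the elimination cannot proceed with arbitrary elementary matrices. The technical heart of the argument will therefore be two-fold. First, I need to verify, using Corollary~\ref{cor:standardmg}, that the chosen elementary matchgates really do exist as standard matchgates -- which amounts to checking that the prescribed edge entries are consistent with all useful GP identities -- and to compute explicitly how left-multiplication by each $E_i$ transforms the entries of $A$. Second, I need to organize the elimination, for example column by column from the right, so that at every intermediate stage the updated matrix still satisfies all useful GP identities; this is what lets the procedure terminate with the identity and keeps the inductive hypothesis applicable to the lower-dimensional residual blocks. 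Careful tracking of the modifier signs $\mu(\Gamma,Z)$ from the definition of the character matrix will be required to make these pieces fit together consistently, and I expect this bookkeeping, rather than any single hard algebraic step, to be where most of the real work lies.
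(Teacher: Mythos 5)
Your reduction of the theorem to closure under inverses, and your overall strategy --- a Gaussian-elimination-style procedure using elementary character matrices that are invertible within the class, organized so that previously fixed entries are not disturbed, combined with an induction that peels off bits --- is essentially the strategy of the paper. However, there are two points where your plan as stated has a genuine gap. First, you aim to drive $A$ all the way to the identity, $E_{i_m}\cdots E_{i_1}A=I$, so that $A^{-1}$ is a product of elementary character matrices. The paper does not (and apparently cannot) do this: the row operations realizable by elementary standard matchgates are quite restricted (each corresponds to adding multiples of row $J'$ to row $I'$ only for pairs $(I',J')$ with a specific bit pattern), and the paper only drives $A$ to a \emph{reducible} form $B=L_s\cdots L_1AR_1\cdots R_t$, meaning the last two diagonal entries are $1$ and the edge entries in the last two rows and columns are $0$. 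By Corollary~\ref{cor:standardmg} this is enough to force the last bit of the corresponding matchgate to decouple (a weight-$1$ edge joining the bottom input--output pair and nothing else), but it is far from the identity. Your plan therefore needs either a proof that full reduction to $I$ is possible with the available elementary gates --- which I doubt --- or it needs to fall back to exactly this weaker target.

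Second, once you retreat to the reducible form, the induction step is not mere block linear algebra: you must show that if the $(k-1)$-bit residual matchgate $\Gamma_1$ (obtained by deleting the bottom edge) has a character-matrix inverse, then so does the full reducible $k$-bit gate. This is the content of the paper's Lemma~\ref{lem:reductionlemma}(ii), and it requires an explicit construction of the inverting \emph{matchgate} of $\Delta_1$ from the inverting matchgate of $\Gamma_1$ --- knowing that the inverse \emph{matrix} has the right block structure does not by itself certify that it satisfies the Grassmann--Pl\"ucker identities. Your proposal omits this ingredient. A smaller point: you locate the main technical danger in keeping the intermediate matrices inside the GP variety, but that is automatic, since every intermediate matrix is a product of character matrices and Theorem~\ref{prop:closeundermutiple} applies. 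The real bookkeeping burden is the ``non-injury'' argument --- showing each action only alters entries not yet satisfied --- together with the observation (via Corollary~\ref{cor:standardmg}) that one only ever needs to control the \emph{edge} entries, since those determine the whole character matrix once the corner entry is $1$. Finally, the paper's base case is $k=1$ (Valiant), not $k=2$; your use of Theorem~\ref{prop:twobitinvertable} as the base is harmless but unnecessary, and the paper notes its argument reproves the $k=2$ case.
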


We will prove theorem \ref{thm:ourselfprop} by induction on the size
of matchgates. The proof proceeds as follows. Based on corollary
\ref{cor:standardmg}, we observe that all $2^k\times 2^k$ character
matrices can be transformed to  a special form $2^{k}\times 2^{k}$
character matrices. This suggests the following:

\begin{definition}
We say that a $k$-bit matchgate is a {\it reducible matchgate}, if
the bottom pair of  nodes $k$ and  $n-k+1$ are connected by a weight
$1$ edge, and there is no other edge incident to any of the nodes
$k$ and $n-k+1$.

 The
character matrix of a reducible matchgate is called a {\it reducible
character matrix}.
\end{definition}

By corollary \ref{cor:standardmg}, a character matrix $B$ is a
reducible character matrix if it satisfies  the following:
\begin{enumerate}
\item [(1)] $B_{2^k-1,2^k-1}=B_{2^k-2,2^k-2}=1$.
\item [(2)] All the edge entries in the last two columns and the last two
rows are 0 except for $B_{2^k-2,2^k-2}$.
\end{enumerate}

Firstly we prove that if the $k$-bit nonsingular  character matrices
are closed under matrix inverse operation, then so are the
$(k+1)$-bit nonsingular reducible character matrices .

Secondly, we introduce some   elementary nonsingular matchgates so
that every nonsingular $2^{k}\times 2^{k}$ character matrix can be
transformed to a reducible character matrix by multiplying with the
character matrices of the elementary matchgates.

This transformation is realized by four phases as follows.  Starting
from $A=A^{(0)}$, we need the following:

\noindent {\bf Phase T1} ($A^{(0)} \Rightarrow A^{(1)}$). Turn the
right-bottom most entry to 1.

\noindent {\bf Phase T2} ($A^{(1)} \Rightarrow A^{(2)}$). Turn the
edge entries in the last row and column to 0's, while keeping the
right-bottom most entry 1.

\noindent {\bf Phase T3} ($A^{(2)} \Rightarrow A^{(3)}$). Turn the
entry $A^{(2)}_{2^k-2,2^k-2}$  to 1, while keeping the right-bottom
most entry 1 and the edge entries in the last row and column 0's.

\noindent {\bf Phase T4} ($A^{(3)} \Rightarrow A^{(4)}$). Turn the
edge entries in the row $2^k-2$ and column $2^k-2$ to 0's, while
 keeping the last two diagonal entries 1's and the edge entries
in the last row and column 0's.

Each phase consists of  several  {\it actions} (or for simplicity,
steps). In each step, either the positions of entries are changed,
or the values of some entries are changed.

An   action is defined to be the multiplication of a character
matrix with an elementary character matrix. The role of an action is
to change some specific entries to be some fixed value $0$ or $1$.
However, such an action will certainly injure other entries which
are undesired.

The crucial observation is that  an appreciate sequence of actions
will gradually satisfy all the entries requirements. During the
course of the transformation, once an entry requirement is satisfied
by some action, it will never be injured again by the future
actions. That is to say, an action may injure only the entries which
 have not been satisfied. This ensures that all the entries requirements
  will be eventually satisfied.

This describes the idea of the proof of theorem
\ref{thm:ourselfprop}. The proof  will also build an essential
ingredient for our second result, the theorem below.

\begin{thm} \label{thm:universalthm}
For every $k>2$, if $\Gamma$ is a matchcircuit composed of level $k$
matchgates, then:
\begin{enumerate}
\item [(1)] $\Gamma$ can be simulated by a level $2$ matchcircuit
$\Delta$.
\item [(2)]  A $k$-bit matchgate
can be simulated  by  O$(k^4)$ many single and two-bit matchgates.
And every matchcircuit $\Gamma$ can be simulated by a level $ 2$
matchcircuit in polynomial time.
\end{enumerate}
\end{thm}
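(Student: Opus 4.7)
The plan is to exploit the constructive decomposition implicit in the proof of Theorem \ref{thm:ourselfprop} and to recurse on $k$. Inside that proof, each nonsingular $2^k\times 2^k$ character matrix $A$ is effectively written as a product $A = E_1 E_2 \cdots E_m \cdot R$, where each $E_i$ is the $2^k\times 2^k$ character matrix of an elementary matchgate appearing in one of the four phases T1--T4, and $R$ is a reducible $k$-bit character matrix. From this decomposition I would deduce simultaneously the existential statement~(1) and the quantitative statement~(2).

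First, I would argue that a reducible $k$-bit matchgate is computationally equivalent to a $(k{-}1)$-bit matchgate together with a passive wire on the bottom bit. Because nodes $k$ and $n-k+1$ are joined by an isolated weight-$1$ edge, any perfect matching of the graph that does not externally match them must pair them together; consequently, up to reindexing, the character matrix $R$ factors as $R'\otimes I_{2}$, where $R'$ is the character matrix of the $(k{-}1)$-bit matchgate obtained by deleting the isolated edge and its endpoints. By the induction hypothesis, $R'$ is simulable by $O((k-1)^4)$ level $2$ matchgates, and $R'\otimes I_2$ is realized by letting the simulation of $R'$ act on bits $1,\ldots,k-1$ in parallel with a trivial wire on bit~$k$.

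Second, I would verify that each elementary matchgate $E_i$ is of level $2$ up to bit reordering. Inspection of the four phases shows that every action modifies only a narrowly localized collection of entries, and the corresponding elementary character matrix differs from the identity only on a $2\times 2$ or $4\times 4$ principal block. By Theorem \ref{thm:valiantandcai}, such a block matrix is the extension to $k$ bits of a single- or two-bit matchgate, provided the active bits are consecutive; in general a constant number of additional swap matchgates, themselves level $2$, suffice to move the active bits into position, costing an $O(k)$ overhead per action. Phases T1 and T3 involve $O(1)$ actions each, while phases T2 and T4 must zero out edge entries in two rows and two columns, indexed by subsets of $X\cup Y$ of sizes one or two, so each of them contains $O(k^2)$ actions. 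One reduction from $k$-bit to reducible $k$-bit therefore costs $O(k^3)$ level $2$ matchgates, and the recursion yields $\sum_{i=2}^{k}O(i^3)=O(k^4)$ level $2$ matchgates for the simulation of a single $k$-bit matchgate. Part~(1) then follows by substituting each constituent $k$-bit matchgate of $\Gamma$ by its level $2$ simulation and invoking Theorem \ref{thm:valiantandcai} to guarantee that the character matrix of the overall matchcircuit is preserved; part~(2) is immediate from the per-gate bound applied to the polynomially many constituent matchgates.

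The main obstacle I anticipate is the restriction of Theorem \ref{thm:ourselfprop} to nonsingular character matrices: a $k$-bit matchgate need not have a nonsingular character matrix, and the inverses $E_i^{-1}$ that appear when one rearranges $A = E_1\cdots E_m R$ must themselves be character matrices. I expect to resolve this either by adapting the actions in phases T1--T4 to operate on a singular $A$ directly, using level $2$ matchgates whose character matrices are themselves permitted to be singular, or by embedding the given $k$-bit matchgate into a slightly larger nonsingular matchgate and then reading off the correct block. A secondary bookkeeping hurdle is the explicit confirmation that every elementary character matrix introduced in the four phases is supported on at most a $4\times 4$ block, together with a careful accounting of the swap overhead needed to bring the active bits into consecutive position as required by Theorem \ref{thm:valiantandcai}.
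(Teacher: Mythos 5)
Your overall strategy coincides with the paper's: run the T1--T4 transformation of Lemma~\ref{lem:changetoreducible}, observe that each elementary gate acts nontrivially on at most two bits (plus swap gates to make those bits consecutive), peel off the bottom bit of the resulting reducible gate, and recurse; your count of $O(k^3)$ level-$2$ gates per reduction step and $O(k^4)$ overall matches the paper's tally for the phases. The paper packages this as Lemmas~\ref{lem:effectofgamma_c}--\ref{lem:univeral} with nine explicit elementary gate types $\Gamma_a,\ldots,\Gamma_i$.

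There are, however, two genuine gaps that you flag but do not close. First, the singular case: the transformation of Lemma~\ref{lem:changetoreducible} really does use nonsingularity (Phase T3 needs a nonzero edge entry), and your fallback of embedding the gate into a larger nonsingular matchgate and ``reading off the correct block'' is not available, since character matrices are constrained by the Grassmann--Pl\"ucker identities and an arbitrary singular character matrix need not sit as a block of a nonsingular one in a way realizable inside a matchcircuit. The paper handles this case separately, transforming a nonzero singular character matrix into a circuit of $\Gamma_h$-type gates and realizing the zero matrix with $\Gamma_i$. Second, replacing a $k$-bit gate in the \emph{middle} of a circuit by its level-$2$ simulation is not automatically licensed by Theorem~\ref{thm:valiantandcai}: a gate applied to bits $x_i,\ldots,x_{i+j}$ with $i>1$ must be even or odd, and a gate on nonconsecutive bits is not covered at all; moreover the character matrix of a matchcircuit lacks the modifier $\mu$, so equality at the matchgate level does not transfer for free. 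You must therefore verify that only even/odd elementary gates are used when simulating an even/odd gate, that the reducible remainder is itself even (this is exactly the extra clause in Lemma~\ref{lem:circuittransform}), and that a gate on nonconsecutive bits is first extended to an even gate on consecutive bits. A minor counting slip: Phase T4 has only $O(k)$ actions, not $O(k^2)$, since one bit of $Y$ is already deleted in every edge entry of column $2^k-2$; this does not affect the final $O(k^4)$ bound.
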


Our proof of theorem \ref{thm:universalthm} is a composition of the
proof of theorem \ref{thm:ourselfprop} and some more elementary
matchgates. On the other hand, one could firstly prove theorem
\ref{thm:universalthm}, then prove \ref{thm:ourselfprop} by
combining theorem \ref{thm:universalthm} and theorem
\ref{prop:twobitinvertable}. However there are subtle difference
between character matrices of matchgate and matchcircuit. Therefore,
this approach needs additional technique.

\section{Group property of the $k$-bit character matrices}
\label{sec4}

In this section, we prove theorem \ref{thm:ourselfprop}.  To proceed
an inductive argument, we exploit the structure of the reducible
character matrices which pave the way to the reductions.

\subsection{Reducible matchgates}

\begin{lem} \label{lem:reductionlemma}
Let $\Delta_1$ be a $(k+1)$-bit reducible  matchgate, that is, the
bottom edge $(k+1, k+3)$ having weight $1$ and there is no any other
edge incident to any of the nodes $k+1$ and $k+3$. Let $\Gamma_1$ be
the $k$-bit matchgate obtained from $\Delta_1$ by deleting the
 edge $(k+1,k+3)$.
Then:

(i) If $\Delta_1$ is invertible, so is $\Gamma_1$.

(ii) If
 $\chi(\Gamma_1)^{-1}$ is  a
character matrix, so is  $\chi(\Delta_1)^{-1}$.
\end{lem}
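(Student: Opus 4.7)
The plan is to exhibit $\chi(\Delta_1)$ in a block-diagonal form that reduces everything to $\chi(\Gamma_1)$. By Theorem~\ref{prop:oneommitable} I can assume $\Delta_1$ has a single omittable node, namely $k+2$. The restricted structure of $\Delta_1$---nodes $k+1,k+3$ have only the edge $(k+1,k+3)$ between them---should force the following block decomposition: after grouping rows by the input bit at position $k+1$ and columns by the output bit at position $k+3$, the two off-diagonal $2^k\times 2^k$ blocks of $\chi(\Delta_1)$ vanish, and both on-diagonal blocks equal $D\,\chi(\Gamma_1)\,D$, where $D$ is the $2^k\times 2^k$ sign-diagonal whose entry indexed by a subset $U$ is $(-1)^{|U|}$.

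The main structural step splits into three sub-cases for $Z\subseteq X\cup Y$. First, if exactly one of $k+1,k+3$ lies in $Z$, the other is isolated in $G_{\Delta_1}-Z$, so $\mathrm{PfS}=0$ and $\chi(\Delta_1,Z)=0$, killing the off-diagonal blocks. Second, if $\{k+1,k+3\}\subseteq Z$, write $Z=X''\cup Y''\cup\{k+1,k+3\}$ so that $G_{\Delta_1}-Z=G_{\Gamma_1}-(X''\cup Y'')$; a direct computation using $\mu(\Gamma,\{i_1,\dots,i_l\})=(-1)^{\sum(i_j-j)}$ shows that inserting $k+1$ into $Z\cap X$ and $k+3$ into $Z\cap Y$ multiplies $\mu$ by $(-1)^{|X''|+|Y''|}$, giving the entry $(-1)^{|X''|+|Y''|}\chi(\Gamma_1,X''\cup Y'')$. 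Third, if $\{k+1,k+3\}\cap Z=\emptyset$, the edge $(k+1,k+3)$ is forced in every nonzero matching of $G_{\Delta_1}-Z$; extracting it contributes a per-matching sign $\pm 1$ according to whether $k+2$ is matched, and a parity count ($2k-|Z|$ non-omittable nodes must be matched in pairs, forcing $k+2$ matched iff $|Z|$ is odd) shows this sign equals $(-1)^{|Z|}=(-1)^{|X''|+|Y''|}$ uniformly across all nonzero matchings. Both surviving blocks thus consist of the same entries $(-1)^{|X''|+|Y''|}\chi(\Gamma_1,X''\cup Y'')$, i.e., both equal $D\chi(\Gamma_1)D$. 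Part~(i) is then immediate, since invertibility of $D\chi(\Gamma_1)D$ is equivalent to invertibility of $\chi(\Gamma_1)$.

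For part~(ii), let $\Gamma_1'$ realize $\chi(\Gamma_1)^{-1}$ (as provided by the hypothesis), and construct a $(k+1)$-bit reducible matchgate $\Delta_1'$ from $\Gamma_1'$ by the mirror of the construction producing $\Gamma_1$ from $\Delta_1$: insert two new nodes at positions $k+1$ and $k+3$, connect them by a weight-$1$ edge, and give them no other incident edges. The sign-diagonal $D$ depends only on the ordering data, which agrees between $\Delta_1$ and $\Delta_1'$, so applying the same block analysis to $\Delta_1'$ yields both diagonal blocks equal to $D\chi(\Gamma_1')D=D\chi(\Gamma_1)^{-1}D=(D\chi(\Gamma_1)D)^{-1}$. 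Hence $\chi(\Delta_1')=\chi(\Delta_1)^{-1}$, realizing the inverse as a character matrix.

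The main obstacle is the sign-bookkeeping that reduces both surviving blocks to the single matrix $D\chi(\Gamma_1)D$---particularly the parity-count argument in the third sub-case, which is what makes the per-matching sign contributed by the forced edge behave as the uniform factor $(-1)^{|Z|}$. Without this uniformity the two blocks would not be equal, and the inductive strategy underlying Theorem~\ref{thm:ourselfprop} would not go through.
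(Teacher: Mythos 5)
Your argument is correct, and for part (i) it coincides with the paper's (block-diagonality of $\chi(\Delta_1)$ after regrouping by the bits of the bottom pair, with the off-diagonal blocks killed because an unmatched isolated non-omittable node forces $\mathrm{PfS}=0$). For part (ii) the construction is also the same as the paper's -- build the candidate inverse gate by re-attaching the weight-$1$ bottom edge to a gate realizing $\chi(\Gamma_1)^{-1}$ -- but the verification differs: the paper composes $\Delta_1$ with $\Delta_2$ as matchgates and checks that the character of the composite circuit is the identity (deferring the details), whereas you instead derive the explicit formula $\chi(\Delta_1)=P^{-1}\mathrm{diag}(D\chi(\Gamma_1)D,\,D\chi(\Gamma_1)D)P$ and read the inverse off directly, using $D^{-1}=D$. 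Your route buys a closed-form description of the reduction map, which makes both parts immediate and supplies exactly the sign bookkeeping the paper omits; I checked the two sign computations (the modifier shift $(-1)^{k-|X''|}\cdot(-1)^{k-|Y''|}=(-1)^{|X''|+|Y''|}$ in the ``both removed'' case, and the forced-edge sign $(-1)^{[k+2\text{ matched}]}=(-1)^{|Z|}$ via the parity count in the ``both present'' case) and they are right. One small caveat: your appeal to Theorem~\ref{prop:oneommitable} to place the unique omittable node at position $k+2$ is not quite what that theorem gives (it controls neither the parity of $|V|$ in the way you need nor the position of the omittable node); the correct justification is that in the paper's application $\Delta_1$ arises from Corollary~\ref{cor:standardmg} as a standard matchgate, so the node between $k+1$ and $k+3$ is the omittable one by construction. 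Your third sub-case genuinely uses that the only node strictly between $k+1$ and $k+3$ is omittable, so this hypothesis should be stated rather than derived from Theorem~\ref{prop:oneommitable}.
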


\begin{proof}(Sketch)
For (i). This holds because $\chi(\Delta_1)$ is a block diagonal
matrix after rearranging the order of rows and columns, and
$\chi(\Gamma_1)$ is equal to one block.

For (ii). We prove this by constructing the inverted matchgate
$\Delta_2(F_2, W_2, Z_2, T_2)$ of $\Delta_1(F_1, W_1, Z_1, T_1)$
from the inverted gate $\Gamma_2(G_2, X_2, Y_2, T_2)$ of
$\Gamma_1(G_1, X_1, Y_1, T_1)$.

It suffices to prove that the composition of $\Delta_1$ and
$\Delta_2$ has the  unit matrix as its character matrix. See Fig.
\ref{Figs:p2} for the intuition of the proof, while detailed
verification will be given in the full version of the paper.

 \begin{figure}[h]
 \begin{center}
 \includegraphics[width=0.48\textwidth]{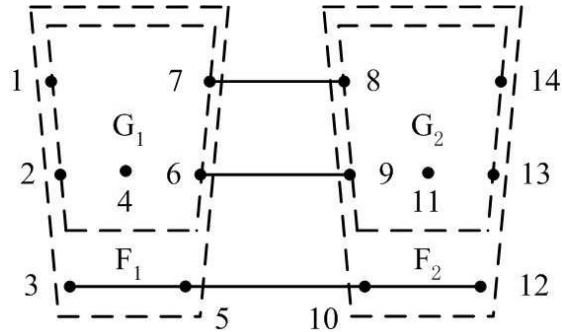}
\caption{Example of $k=2$. $X_1=\{1,2\}$, $Y_1=\{6,7\}$, $T_1=\{4\}$
$X_2=\{8,9\}$, $Y_2=\{13,14\}$, $T_2=\{11\}$, $W_1=\{1,2,3\}$,
$Z_1=\{5,6,7\}$, $W_2=\{7,8,9\}$, $Z_2=\{12,13,14\}$.
\label{Figs:p2}}
 \end{center}
 \end{figure}
\end{proof}

\subsection{The transformation lemma} \label{sec:transformation}
In this part, we construct the matchgates to realize the phases T1
-- T4 prescribed in section \ref{sec:tranformation T1-T4}, and show
that every $k$-bit nonsingular character matrix can be transformed
to a $k$-bit reducible character matrix by using the transformation.

The key point to the proof of the theorem is the following:

\begin{lem}\label{lem:changetoreducible}
Let $A$ be a $2^k \times 2^k$ nonsingular character matrix. Then
there exist nonsingular character matrices $L_s,\ldots, L_2,L_1,
R_1, R_2, \ldots, R_t$ for some $s$ and $t$ such that $L_s\cdots
L_2L_1A R_1$ $R_2 \cdots R_t$ is a reducible character matrix.
\end{lem}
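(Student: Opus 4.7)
The plan is to build four sequences of nonsingular ``elementary'' actions, one per phase T1--T4. Each action will be a left- or right-multiplication by the character matrix of a one- or two-bit matchgate tensored with identities on the remaining wires; by Theorem~\ref{prop:closeundermutiple} such tensored matrices are themselves character matrices, and composing nonsingular matrices preserves nonsingularity. The task in each phase is thus to exhibit, for each required entry change, a single small matchgate whose effect on the full matrix is the desired local update, and to order the actions so that once an entry meets its target value it is never disturbed again.

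For Phase~T1 I use nonsingularity of $A$ to find a nonzero entry in its last row, move it to the bottom-right position via a permutation character matrix (a product of two-bit swap matchgates acting on the output side), and then normalize it to $1$ by a one-bit diagonal rescaling on the last output wire. For Phase~T2 I eliminate the edge entries in the last row and column one at a time using the bottom-right pivot: each such edge entry differs from the pivot by a one- or two-bit flip, which can be killed by right- or left-multiplication with a triangular one- or two-bit matchgate on the flipped wires. I order the actions so that single-bit-difference entries are cleared before two-bit ones; the triangular shape of each action guarantees that an entry already set to $0$ remains $0$.

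Phase~T3 promotes $A^{(2)}_{2^k-2,\,2^k-2}$ to $1$ without disturbing the last row or column. I restrict attention to character matrices that act as the identity on index strings whose last bit equals $1$, so as to fix the last row and column pointwise. Nonsingularity, preserved through T1 and T2, supplies a nonzero entry in column $2^k-2$ outside the last row to serve as pivot; a permutation acting only on the non-last rows and a single-wire rescaling then yield the desired $1$. Phase~T4 mirrors Phase~T2 but with the new pivot at $(2^k-2,\,2^k-2)$: the remaining edge entries in row and column $2^k-2$ are zeroed by two-bit matchgates engineered to act as the identity on strings whose last bit is $1$ (preserving the last row and column) and triangularly on strings whose last bit is $0$.

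The main obstacle is Phase~T4: each elementary matrix must simultaneously preserve the last row and column, both bottom diagonal $1$'s, and every edge entry already zeroed, while still being strong enough to kill the remaining edge entries in row and column $2^k-2$. This forces the underlying matchgates into a rigid form --- identity on one index subspace, triangular on its complement --- and the verification that such matrices are in fact character matrices will proceed either by Theorem~\ref{prop:kehua} (checking the useful Grassmann--Pl\"ucker identities) or by exhibiting explicit small matchgates with the desired character. A careful ordering argument, processing single-bit-flip eliminations before two-bit-flip ones and always working along the triangular structure away from each pivot, then ensures that no previously satisfied entry is ever re-injured, so that after all four phases the resulting matrix meets the two conditions defining a reducible character matrix.
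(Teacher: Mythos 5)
Your four-phase plan coincides with the paper's strategy, and your Phase~T2 (triangular one- and two-bit eliminations ordered so that satisfied entries are never re-injured) matches the paper's argument. But two of your mechanisms would fail as stated. In Phase~T1 you propose moving a nonzero entry to position $(2^k-1,2^k-1)$ by ``two-bit swap matchgates''; a wire swap permutes the bits of the index and therefore preserves its Hamming weight, so no product of swaps can carry a column index $J\neq 2^k-1$ (which has at least one $0$) to the all-ones index $2^k-1$. The gate you actually need is the single-bit \emph{flip} (the $l$-th input--output pair joined by a path of length~$2$, nonzero exactly when one endpoint is externally matched); multiplying by these on the left for each $0$ in $I$ and on the right for each $0$ in $J$ moves any nonzero entry $A_{I,J}$ to the corner up to sign, which is what the paper does. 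In Phase~T3 your pivot, ``a nonzero entry in column $2^k-2$ outside the last row,'' is not good enough, and ``a permutation acting only on the non-last rows'' is not in general a character matrix (and the realizable wire permutations again cannot change Hamming weight). The paper instead shows that nonsingularity forces some \emph{edge} entry $(X-\{i\},Y-\{j\})$ to be nonzero --- this uses Corollary~\ref{cor:standardmg}, that a character matrix with corner entry $1$ is determined by its edge entries --- and then moves it to $(2^k-2,2^k-2)$ by the wire swaps $C_{i,k}$ on the left and $C_{j,k}$ on the right; these preserve the zeros already installed in the last row and column precisely because swaps carry edge entries to edge entries.

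Phase~T4 is where you correctly identify the main obstacle, but your proposed gate shape (``identity on strings whose last bit is $1$'') is not the right invariant: the very rows being modified, $X-\{i\}$, have last bit $1$. The paper's gate joins input $i$ to the \emph{last} output node; its off-diagonal entries $(I',J')$ satisfy $I'|_i=J'|_k=0$, $I'|_k=J'|_i=1$, $I'<X-\{i\}$, and --- the decisive point --- $I'$ and $J'$ contain the same number of $0$'s, at least one. Consequently every row operation adds an already-zeroed edge entry of the last column to another already-zeroed edge entry of that column, while leaving the last two rows untouched; that is what preserves the T1--T3 work. Your proposal recognizes that some such rigidity is needed but does not supply it, so as written Phases T1, T3 and T4 each contain a genuine gap.
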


\begin{proof}
Given a nonsingular character matrix $A$, we denote $A$ by
$A^{(0)}$. We construct the matchgates to realize the four phases T1
-- T4. We use $A^{(i)}$ to denote the character matrix obtained from
$A^{(i-1)}$ by using phase Ti, where $i=1,2,3,4$. We start with
$A^{(0)}$, and define the  transformation to be  a series of
actions, defined in section \ref{sec:tranformation T1-T4}. In the
discussion below, we will use $A$ to denote the character matrix
obtained so far in the construction from $A^{(0)}$ (or shortly, the
current matrix).

The four phases proceed as follows.

\noindent {\bf Phase T1:} Suppose that $\Gamma_l$ is the $k$-bit
matchgate such that the $l$-th pair of input-output nodes are
connected by a path of length 2 on which each edge has weight 1,
 and each of the other pairs is connected by an edge of weight $1$, and
$k+1$ is the only unomittable node other than the input and output
nodes. (See Fig. \ref{Figs:p3} (a)). Let $C_l$ denote the character
matrix of $\Gamma_l$.

 \begin{figure}[h]
 \begin{center}
 \includegraphics[width=0.45\textwidth]{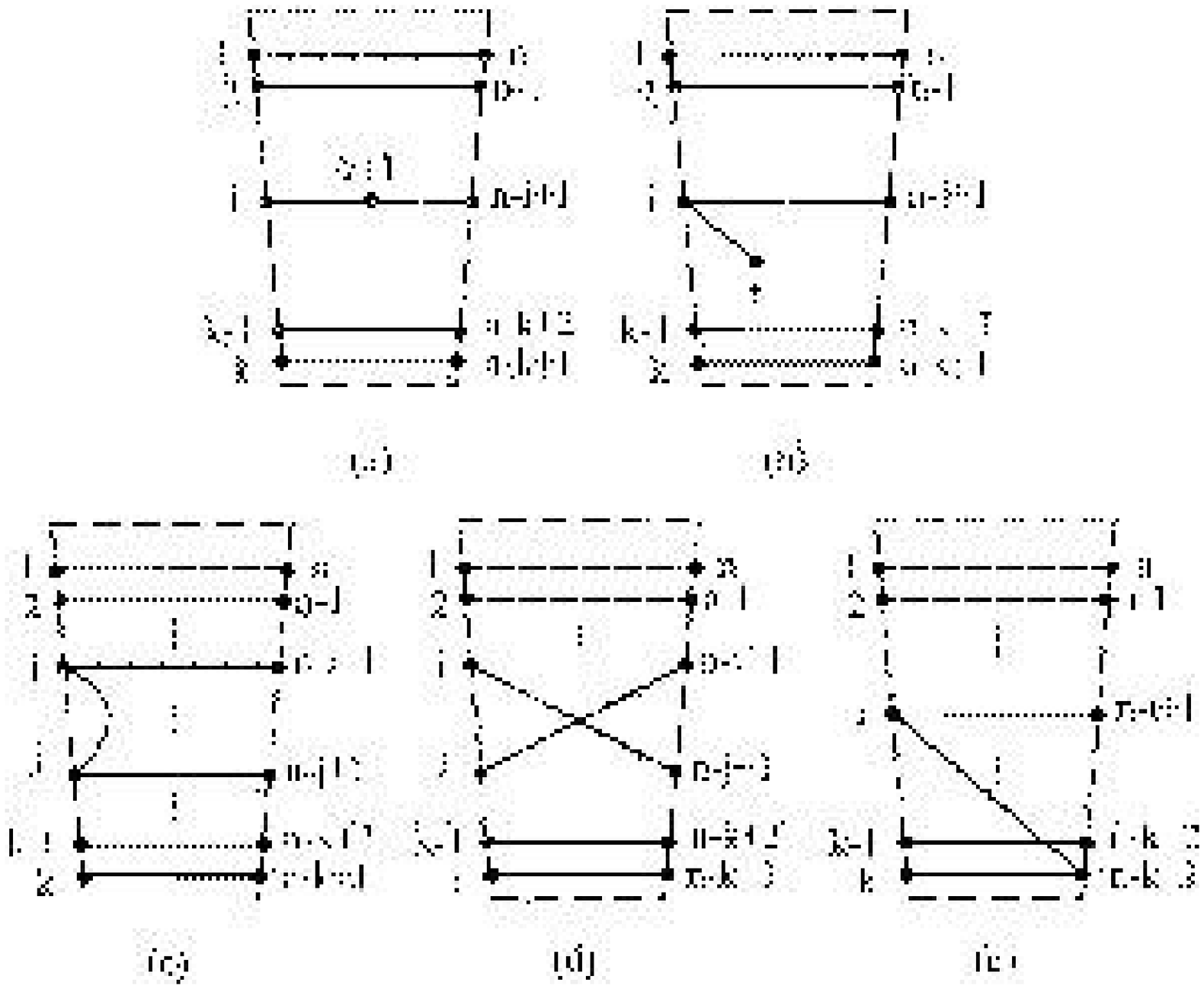}
\caption{ \label{Figs:p3}}
 \end{center}
 \end{figure}

Suppose without loss of the generality that $A_{I=i_1 i_2 \ldots
i_k,J=j_1 j_2 \ldots j_k} \neq 0$. Define

 $$L_1=\prod_{1 \leq l \leq
k, i_l=0} C_l,\ \ \ R_1=\prod_{1 \leq l \leq k, j_l=0} C_l.$$

It is easy to see that the right-bottom most entry, $a$ say, of $L_1
A R_1$ is either $A_{I,J}$ or $-A_{I,J}$ by computing the
$\textrm{PfS}(G-X \cup Y)$ of the composed matchgate corresponding
to $L_1 A R_1$. Let $L_2=\frac {1}{a} E$, and $A^{(1)}=L_2 L_1 A
R_1$. Clearly $L_2$ is a character matrix, so is $A^{(1)}$, by
theorem \ref{prop:closeundermutiple}.

\noindent {\bf Phase T2}: Phase T2 will change the edge entries in
the last column and the last row to 0's. We first describe the
actions for the column as follows.

{\bf Phase T2 for the last column}:

We turn the edge entries in the last column to $0$'s one by one from
bottom to top. To turn an edge entry $(X',2^k-1)$ to zero, we need a
row transformation applied to the current matrix $A$, which adds the
multiplication of $-b$ and the last row to row $X'$, where $b$ is
the value of entry $(X',2^k-1)$ of the current matrix.

Therefore phase T2 for the last column consists of the following
actions. In decreasing order of $X'$, for every edge entry
$(X',2^k-1)$, we have:

{\bf Action $(X',2^k-1)$}: Multiplying an elementary character
matrix, $L$ say,  to the current character matrix $A$ from the left
side, where $L$ is a character matrix satisfying that the diagonal
entries are all $1$'s, and that ${L}_{I=X',2^k-1}=-b$.

This  makes some row transformations to the current matrix according
to the nonzero entries other than the diagonal entries. The row
transformation corresponding to ${L}_{I=X',2^k-1}=-b$ is exactly the
one that realizes the goal of this action.


Now we formally construct the matchgate to realize the character
matrix $L$ as required in the action $(X',2^k-1)$ above. The
construction is divided into two cases depending on the size of $X'$
as follows.

{\bf Case 1}.  $X'=X-\{i\}$ for some $i$.

We use the matchgate with the following properties: (1) each
input-output pair of the gate is connected by an edge of weight $1$,
and (2) it contains one more edge $(i,t)$  to realize $L$, where $t$
is the unique omittable node, and the weight of $(i,t)$ is either
$b$ or $-b$ ensuring ${L}_{I,2^k-1}=-b$. For intuition of the
matchgate, a reader is referred to Fig. \ref{Figs:p3} (b).

Let $(I',J')$ be an arbitrary nonzero entry of $L$ other than the
diagonal entries. By the construction of the gate, we have that the
$i$-th bit of $I'$ and $J'$ are 0 and 1 respectively, and that $I'$,
and $J'$ are identical on the $j$-th  bit for every $j\not=i$. Hence
$I'<J'$ and $I' \leq I$ (recall that $I=X'$). The action at entry
$(X',2^k-1)$ in this case actually makes the following row
transformation:  For each such pair $(I',J')$, row $I'$ is added by
the multiplication of $L_{I',J'}$ and row $J'$. Since $I' \leq I$,
all the edge entries $(I_1,2^k-1)$ with $I_1>I$ have never been
injured by the action in this case.

{\bf Case 2}  $X'=X-\{i,j\}$ for some $i,j$.

The character matrix $L$ in this case is constructed by a similar
way to that in case 1 above, using the matchgate in Fig.
\ref{Figs:p3} (c).

The cost of the action in this case is similarly analyzed to that
for case 1.

Recall that after phase T1, the right-bottom most entry is  1. The
actions in both case 1 and case 2 of  phase T2  above will never
injure the last row of the matrix, so that the satisfaction of T1 is
still preserved by the current state of the construction.

{\bf Phase T2 for the last row}: The construction, and analysis for
the actions is the same as that for the column case with the roles
of rows and columns exchanged.

Therefore, the goal of T2 prescribed in section
\ref{sec:tranformation T1-T4} has been realized.

\noindent {\bf Phase T3}: The goal of this phase is similar to that
of phase T1, but different actions are needed. T3 consists of 2
actions. The first action moves a nonzero edge entry to position
$(2^k-1,2^k-1)$, and the second one changes edge entry
$(2^k-1,2^k-1)$ to 1. The actions proceed as follows.

{\bf Action 1}: First, we choose a nonzero edge entry. Since
$A^{(2)}$ is nonsingular, there must be a nonzero edge entry
$A^{(2)}_{X'=X-\{i\},Y'=Y-\{j\}}$ for some $i$ and $j$. (Otherwise,
all edge entries are zero's so that $A^{(2)}$ is a zero matrix,
contradicting the non-singularity of $A^{(0)}$.)

We use a gate of type $\Gamma_d$, defined as follows: (i) connect
each input-output pair other than the $i$-th or the $j$-th pair by
an edge, (ii) the $i$-th input  is connected to the $j$-th output,
and (iii) the $j$-th input is connected to the $i$-th output. All
edges are of weight 1. (See Fig. \ref{Figs:p3} (d))
 Let $C_{i,j}$ denote the character matrix of the matchgate
 described above.

This action just turns $A^{(2)}$ to $C_{i,k} A^{(2)} C_{j,k}$ by
connecting the  matchgate of $C_{i,k}$ with the gate of $A^{(2)}$,
and the gate of $C_{i,k}$ in the order of left to right.

Firstly, we verify that action 1 realizes its goal. Generally,
multiplying $C_{a,b}$ from left (resp. right) side is equivalent to
exchanging pairs of rows (resp. columns) $i_1i_2 \ldots i_a \ldots
i_b \ldots i_k$ and $i_1i_2 \ldots i_b \ldots i_a \ldots i_k$,
modular a factor of $1$ or $-1$. Hence, the edge entry
$(2^k-2,2^k-2)$ of $C_{i,k} A^{(2)} C_{j,k}$ is either
$A^{(2)}_{X',Y'}$ or $-A^{(2)}_{X',Y'}$.

Secondly, we analyze the cost of  the action. Notice that the row
exchanges are determined by a bit exchange on the labels of rows, so
that the number of zeros in (the string of) the row label is kept
unchanged. By definition, an edge entry can be exchanged only with
another edge entry. Therefore all edge entries in the last row and
column are kept zeros. In addition, it is easy to see that the
left-bottom most entry is kept 1.

{\bf Action 2}: We construct a matchgate with all of the
input-output pairs connected by an edge of weight 1, except that the
last pair is connected by an edge of weight $w= \frac
{1}{A_{2^k-2,2^k-2}}$.

All entries of the character matrix of this matchgate are zeros,
except for the diagonal entries. A diagonal entry $(I,I)$ is $w$, if
the last bit of $I$ is $0$, and $1$, otherwise.

We multiply this character matrix with the current matrix, then a
straightforward calculation shows that  entry $(2^k-1,2^k-1)$ is
turned to 1, while all the satisfied entries achieved previously are
still preserved.

The goal of T3 is realized.

\noindent{\bf Phase T4}: This phase is similar to phase T2, except
that we need consider the consequence on the last column and row. We
start from changing  the edge entries in column $2^k-2$.

{\bf Phase T4 for column $2^k-2$}:
 Suppose we are going to change
edge entry $(X-\{i\},Y-\{n-k+1\})$ to zero by the order from bottom
to top. Denote the action realizing this goal by {\it action at}
$(X-\{i\},Y-\{n-k+1\})$.

We construct the elementary matchgate used in the action at
$(X-\{i\},Y-\{n-k+1\})$. Each pair of input-output nodes of this
matchgate is connected by an edge of weight 1, furthermore, the
$i$-th input node is connected to the last output node by an edge of
weight $w$, where $w$ is either  $A_{X-\{i\},Y-\{n-k+1\}}$ or
$-A_{X-\{i\},Y-\{n-k+1\}}$ such that entry $(X-\{i\},Y-\{n-k+1\})$
of the character matrix of the matchgate is
$-A_{X-\{i\},Y-\{n-k+1\}}$. (See Fig. \ref{Figs:p3} (e).)

We examine the nonzero entries in the character matrix $L$ of the
constructed matchgate. We first   note that all diagonal entries are
1's. Let $(I',J')$ denote an arbitrary nonzero entry other than the
diagonal entries of the matrix $L$. By construction of the
matchgate, $I'$ and $J'$ differ at only the $i$-th and the $k$-th
bits, and $I'|_i=J'|_k=0$, $I'|_k=J'|_i=1$,  $I'<J'$, $I'<X-\{i\}$
and $I'$, $J'$ contain the same number of $0$'s, which is at least
$1$, where $I'|_i$ denotes the $i$-th bit of $I'$. The action at
$(X-\{i\},Y-\{n-k+1\})$ multiplies $L$  with $A$ from the left side.
It makes some row transformations: for every such entry $(I',J')$
chosen as above, add  row $I'$  by the multiplication of row $J'$ by
$L_{I',J'}$. So the goal of this action is realized.

Now we analyze the cost of the action. We first prove that it does
not injure the edge entries in column $2^k-2$ which have already
been satisfied. The reason is similar to that in phase T2. Because
$I'\leq X-\{i\}$, the action only injures the rows with indices less
than $X-\{i\}$.

The cost of the action is different from that in phase T2 in that it
may affect the edge entries in the last column which have already
been satisfied in phases T1 and T2. Because $I'$ and $J'$ contain
the same number of $0$'s, which is at least $1$, all the row changes
made by the action always add a zero edge entry of the last column
to another zero edge entry in the same column. Hence, it does not
injure the satisfied entries in the last column. Additionally, it is
obvious that  the last two rows are preserved during the current
action, so the left-bottom most entry, the edge entries in the last
row and entry $(2^k-2,2^k-2)$ are all preserved.

%
%

{\bf Phase T4 for row $2^k-2$}: Similar actions to that in phase T4
for the column above can be applied to the row $2^k-2$ to change its
edge entries to 0's.

Therefore, T4 realizes its goal, at the same time, it preserves the
satisfied entries in phases T1 -- T3.

We have realized the phases T1 -- T4 prescribed in section
\ref{sec:tranformation T1-T4}, by corollary \ref{cor:standardmg},
$B$ is a reducible character matrix. The lemma follows.
\end{proof}

\subsection{Proof of theorem \ref{thm:ourselfprop}}

\begin{proof} We prove by induction on $k$ that for
every $k$, and every $2^k\times 2^k $ character matrix $A$, if $A$
is invertible, then $A^{-1}$ is a character matrix.

The case for $k=1$ is easy, the first proof was given by Valiant in
\cite{val02}.

Suppose by induction that the theorem holds for $k-1$. By lemma
\ref{lem:changetoreducible}, there exist nonsingular character
matrices $L_i$ and $R_j$ such that $B=L_s \cdots L_2L_1A R_1 R_2
\cdots R_t$ is the character matrix of a reducible matchgate
$\Delta$. Let $B'$ be the $2^{k-1} \times 2^{k-1}$ character matrix
of $\Gamma$ constructed from $\Delta$ by deleting the bottom edge.

Since $A$ is invertible, so is $B$, and so is $B'$ by lemma
\ref{lem:reductionlemma}. By the inductive hypothesis, $B'^{-1}$ is
a character matrix, so is $B^{-1}$ by lemma
\ref{lem:reductionlemma}.

By the choice of $L_i$ and $R_j$, for all $1\leq i\leq s$ and $1\leq
j\leq t$, we have that

 $$A^{-1}=R_1R_2 \cdots R_t B^{-1} L_s \cdots L_2 L_1.$$

 By theorem
\ref{prop:closeundermutiple}, $A^{-1}$ is also a character matrix.

This completes the proof of theorem \ref{thm:ourselfprop}.
\end{proof}

We notice that the inductive argument in the proof of theorem
\ref{thm:ourselfprop} also gives a different proof for the result in
the case of $k=2$. Our method is a constructive, and uniform one. It
may have some more applications.

\section{Level $2$ matchgates are universal} \label{sec5}

We introduce nine types of  matchgates as our elementary gates.  We
use $\Gamma_a, \ldots, \Gamma_i$, to denote the elementary level $2$
matchgates corresponding to that in the following Fig. \ref{Figs:p5}
(a), (b), (c), (d),  (e), (f), (g), (h) and (i) respectively.

 \begin{figure}[h]
 \begin{center}
 \includegraphics[width=0.45\textwidth]{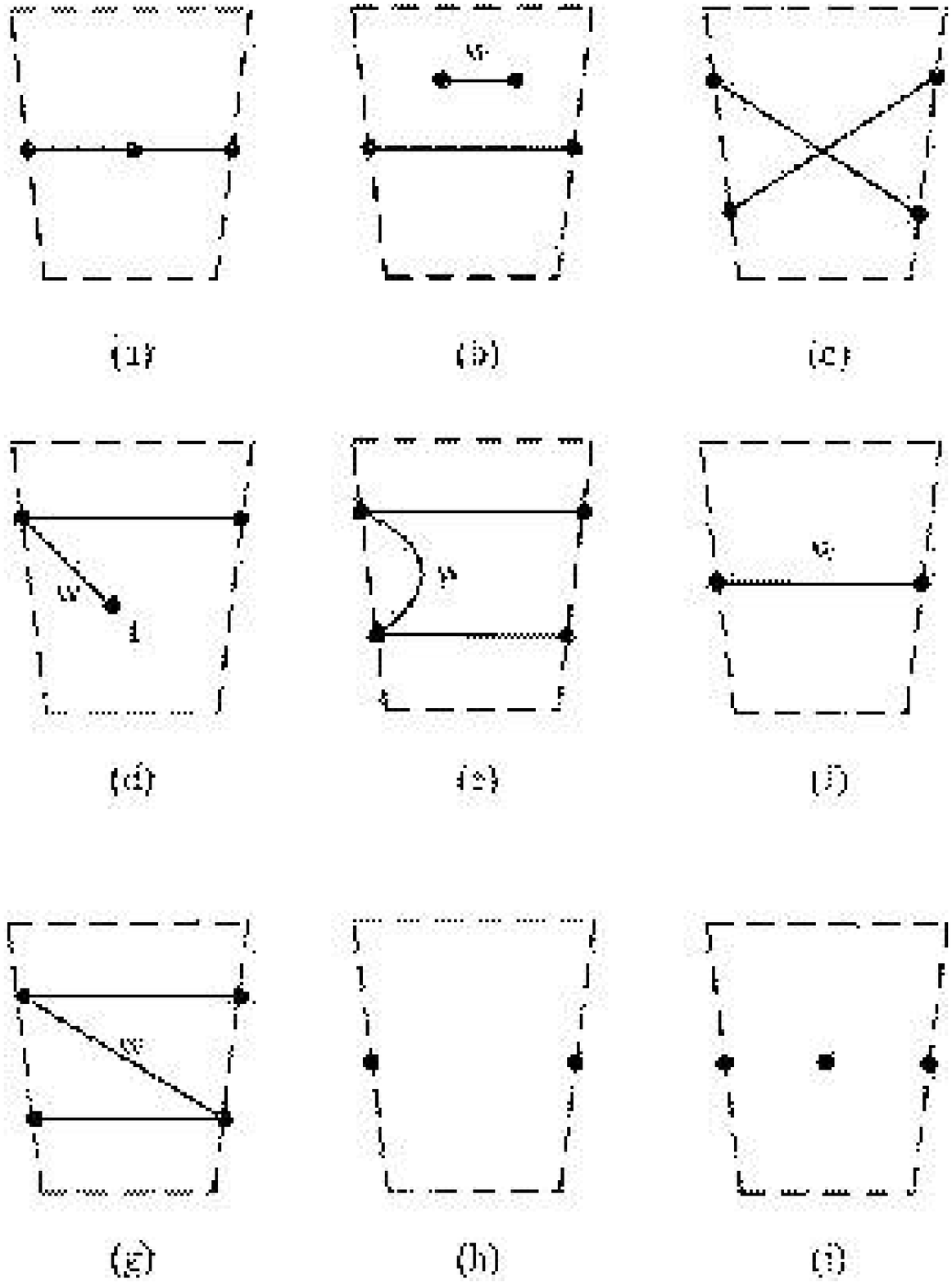}
\caption{ \label{Figs:p5}}
 \end{center}
 \end{figure}

We describe the elementary gates as follows. All edges in $\Gamma_a$
have weight 1. All edges connecting an input and an output node
except for the edge in $\Gamma_f$, and the diagonal edge in
$\Gamma_g$, are all of weight  1. The remaining edges take
 weights $w$.

$\Gamma_a$ makes a row (or column, when it is multiplied from right
side) exchange, which is a special transformation, of the character
matrix according to a bit flip on the label, and it is used to move
a nonzero entry to the right-bottom most entry by the same way as
that in the proof of theorem \ref{prop:kehua} in \cite{cc06c}.
$\Gamma_b$ is used to realize $cE$, and to turn a nonzero entry to
1. Both $\Gamma_a$ and $\Gamma_b$ are only used in the first phase,
i.e. T1, of the transformation. Intuitively, $\Gamma_c$ can exchange
two consecutive bits, and it allows us to apply some other
elementary gates to nonconsecutive bits. $\Gamma_d$ and $\Gamma_e$
are used in  phase T2 to eliminate the edge entries in the last
column and the last row. $\Gamma_c$ will be also used in phase T3 to
move a nonzero edge entry to position $(2^k-2,2^k-2)$, in which
case, $\Gamma_f$ will further turn this entry to 1. $\Gamma_g$ is
used in phase T4 to eliminate the edge entries in the column $2^k-2$
and row $2^k-2$. A nonzero singular character matrix will be
transformed to a matchcircuit composed of only $\Gamma_h$-type
gates. $\Gamma_i$ is used to realize zero matrix. To understand the
composition of $\Gamma_c$ with other elementary gates, we need the
following:

\begin{lem} \label{lem:effectofgamma_c}
Suppose $A$ is the character matrix of a $k$-bit matchcircuit
$\Delta$, and $P_1$, $P_2$  are two arbitrary permutations on $k$
elements. There exists matchcircuit $\Lambda$ constructed from
$\Delta$ by adding some gates $\Gamma_c$, such that the
corresponding character matrices $B$ satisfying
$B_{2^k-1,2^k-1}=A_{2^k-1,2^k-1}$ and $|B_{i_1 \cdots i_k, j_1
\cdots j_k}|=|A_{P_1(i_1 \cdots i_k), P_2(j_1 \cdots j_k)}|$.
\end{lem}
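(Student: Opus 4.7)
The plan is to read off from Fig.~\ref{Figs:p5}(c) that $\Gamma_c$ is a $2$-bit matchgate which, when inserted at bits $(\ell,\ell+1)$ of a $k$-bit matchcircuit, corresponds to multiplying the circuit's character matrix by $I_{2^{\ell-1}}\otimes\chi(\Gamma_c)\otimes I_{2^{k-\ell-1}}$ (from the left when inserted at the input end, from the right when inserted at the output end, as justified by Theorem~\ref{thm:valiantandcai}). A direct enumeration of the weighted perfect matchings of $\Gamma_c$ shows that $\chi(\Gamma_c)$ is, up to $\pm 1$ signs on its nonzero entries, the permutation matrix implementing the transposition of its two bits. Consequently, left-multiplication by the lifted matrix realizes the adjacent transposition $\tau_\ell=(\ell,\ell+1)$ on the row labels of the current character matrix up to an entry-wise sign; the symmetric statement holds for columns via right-multiplication.

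The next step is to realize the arbitrary $P_1,P_2\in S_k$. Since $S_k$ is generated by the adjacent transpositions $\tau_1,\ldots,\tau_{k-1}$, we fix reduced expressions $P_1=\tau_{a_1}\cdots\tau_{a_m}$ and $P_2=\tau_{b_1}\cdots\tau_{b_n}$ and insert the corresponding sequences of $\Gamma_c$ gates at the input side and the output side of $\Delta$, respectively. By Theorem~\ref{thm:valiantandcai} the character matrix of the resulting matchcircuit $\Lambda$ is exactly the product of the constituent character matrices, so combining this with the previous paragraph yields
\[
  |B_{i_1\cdots i_k,\,j_1\cdots j_k}| \;=\; |A_{P_1(i_1\cdots i_k),\,P_2(j_1\cdots j_k)}|
\]
for every pair of indices. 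Note also that each inserted $\Gamma_c$ is a $2$-bit gate acting on two consecutive bits, so $\Lambda$ remains a legitimate level~$2$ matchcircuit in the sense of Theorem~\ref{thm:valiantandcai}.

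It remains to verify that $B_{2^k-1,2^k-1}=A_{2^k-1,2^k-1}$, with no sign ambiguity. The string $1\cdots 1$ is fixed by every $\tau_\ell$, so this diagonal entry of $B$ equals $A_{2^k-1,2^k-1}$ times a product of signs, one for each inserted $\Gamma_c$. The entry of $\chi(\Gamma_c)$ at position $(11,11)$ is $+1$: when both input and output bits are $1$, the sub-Pfaffian-sum is computed after deleting all of $X\cup Y$, leaving the single weight-$1$ matching on the internal structure, and the modifier $\mu(\Gamma_c,X\cup Y)$ evaluates to $+1$ since the externally matched vertices occur in the standard order with no overlaps. Tensoring with identity matrices preserves this, so each sign factor is $+1$ and the transport is exact.

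The main technical obstacle is the sign bookkeeping of the first paragraph: one must unpack $\mu(\Gamma_c,Z)$ for each of the four subsets $Z\subseteq X\cup Y$, confirm that the nonzero positions of $\chi(\Gamma_c)$ are precisely those forced by the bit-swap, and then track how these positions propagate through the tensor $I\otimes\chi(\Gamma_c)\otimes I$ to match the relabelling prescribed by $\tau_\ell$ on $k$-bit strings. Once this is in hand, the rest is a routine appeal to the Coxeter presentation of $S_k$ and to Theorem~\ref{thm:valiantandcai}.
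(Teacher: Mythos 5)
Your proposal is correct and follows exactly the route the paper intends: the extended abstract defers the proof of Lemma~\ref{lem:effectofgamma_c} to the full version, but its description of $\Gamma_c$ as an exchange of two consecutive bits used ``to apply some other elementary gates to nonconsecutive bits'' is precisely your strategy of decomposing $P_1$ and $P_2$ into adjacent transpositions, each realized by a lifted $\Gamma_c$ via Theorem~\ref{thm:valiantandcai}, with entries controlled only up to sign except at the all-ones corner, which you correctly pin down exactly. The one step worth making explicit is that $\Gamma_c$ is an \emph{even} gate (every nonzero entry of its character matrix corresponds to a $Z$ of even size), since this is what licenses inserting it on consecutive bits $x_\ell,x_{\ell+1}$ in the middle of the circuit under the hypotheses of Theorem~\ref{thm:valiantandcai}.
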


The following lemma gives the transformation for matchcircuits.

\begin{lem}\label{lem:circuittransform}
For any $k>2$, and any $k$-bit matchcircuit $\Delta$ consisting of a
single nonsingular $k$-bit matchgate $\Gamma$, there is a new
matchcircuit $\Lambda$ constructed by adding some invertible single
 and two-bit matchgates to $\Delta$, such that the character
matrix $B$ of $\Lambda$ is reducible. Furthermore, $B$ is the
character matrix of an even reducible matchgate.
\end{lem}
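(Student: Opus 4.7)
The plan is to replay the four-phase transformation T1--T4 from the proof of Lemma \ref{lem:changetoreducible}, but now to implement every elementary action as a short circuit built from the level-$2$ matchgates $\Gamma_a,\ldots,\Gamma_g$ of Fig.~\ref{Figs:p5}, appended to $\Delta$ on the left or on the right. Each elementary character matrix that appears in the proof of Lemma \ref{lem:changetoreducible} is nontrivial on at most two input or output bits, so the hope is that a short sub-circuit of single- and two-bit gates can simulate each action exactly, yielding at the end a matchcircuit $\Lambda$ whose character matrix $B$ is reducible.

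Concretely, Phase T1 uses $\Gamma_a$ in place of the row/column swaps $C_l$ and $\Gamma_b$ as the pivot-rescaling $\frac{1}{a}E$; Phase T2 uses $\Gamma_d$ for the Case~1 eliminations and $\Gamma_e$ for Case~2; Phase T3 uses $\Gamma_c$'s to move a nonzero edge entry into position $(2^k-2,2^k-2)$ (Action~1) and then $\Gamma_f$ to rescale it to $1$ (Action~2); Phase T4 uses $\Gamma_g$ to zero out the remaining edge entries in row and column $2^k-2$. The main obstruction is that a level-$2$ gate acts on two consecutive bits only, whereas the elementary matrices in T2, T3 and T4 may couple an arbitrary bit $i$ with the bottom bit $k$. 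This is precisely the r\^ole of Lemma \ref{lem:effectofgamma_c}: by sandwiching the level-$2$ implementation of each action between $\Gamma_c$'s one can relocate the relevant bits to positions $k-1,k$, perform the action, and then permute back, at the cost only of a bit-permutation of the entries (and possibly a harmless sign flip) that leaves the right-bottom most entry intact. The invariants of Lemma \ref{lem:changetoreducible} --- that each action injures only entries not yet satisfied --- transfer verbatim, because the $\Gamma_c$-conjugations relabel targets and satisfied entries in the same way.

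It remains to verify the evenness clause. By direct inspection of Fig.~\ref{Figs:p5} each of $\Gamma_a,\ldots,\Gamma_g$ can be arranged to be even, so that only entries $(X',Y')$ with $|X'|+|Y'|$ of the appropriate parity are nonzero in the corresponding character matrix; by Theorem \ref{prop:oneommitable} one may assume $\Gamma$ itself already has a single omittable node, and if its parity profile is not uniform one first adjusts it by a single-bit parity-correcting matchgate appended to $\Delta$. Since the composition of even character matrices of the same size is again even (cf.\ clause~(2) of Theorem \ref{thm:valiantandcai}) and the reducibility achieved by T1--T4 is preserved by the parity adjustment, the final $B$ is indeed the character matrix of an even reducible matchgate. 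The most delicate bookkeeping will be the accumulation of sign flips from the many $\Gamma_c$-conjugations --- exactly the issue that Lemma \ref{lem:effectofgamma_c} is engineered to control --- together with the verification that the parity adjustment commutes with, or is compatible with, each of the four phases.
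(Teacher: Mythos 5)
Your reduction of the lemma to a level-$2$ reimplementation of the phases T1--T4, with $\Gamma_a,\Gamma_b$ for T1, $\Gamma_d,\Gamma_e$ for T2, $\Gamma_c,\Gamma_f$ for T3, $\Gamma_g$ for T4, and $\Gamma_c$-conjugation via Lemma \ref{lem:effectofgamma_c} to reach non-adjacent bits, is exactly the strategy the paper outlines (the paper defers the details to its full version, so only this outline is available for comparison). For the reducibility claim your plan is sound, modulo the bookkeeping you acknowledge; one point you should make explicit is that each inserted two-bit gate acting on an interior pair of consecutive bits must be even or odd for Theorem \ref{thm:valiantandcai} to apply at all, and that the character matrix of a match\emph{circuit} carries no modifier $\mu$ --- the ``subtle difference'' the paper itself flags --- so the invariants of Lemma \ref{lem:changetoreducible} do not quite transfer ``verbatim'' but only up to signs that have to be tracked.

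The genuine gap is your treatment of the evenness clause. Your proposed fix --- ``if its parity profile is not uniform one first adjusts it by a single-bit parity-correcting matchgate'' --- cannot work: $\chi(\Gamma)$ is an arbitrary nonsingular character matrix and may have nonzero entries $\chi(\Gamma,Z)$ for both parities of $|Z|$; multiplying by an \emph{invertible} one-bit gate is an invertible linear map and cannot annihilate the odd-parity half of the entries. Moreover several of the elementary gates you use (e.g.\ $\Gamma_a$, which flips a bit, and the T2 gate of Case~1, which joins an input to the omittable node) are not even, so the claim that $B$ is a product of even matrices fails at the outset. The correct route is shorter and does not go through the factors at all: in a reducible matrix every edge entry with exactly one index deleted lies in the last row or column and is therefore forced to be $0$, so all odd edge entries of $B$ vanish; since by Corollary \ref{cor:standardmg} a matrix with right-bottom entry $1$ satisfying the useful Grassmann--Pl\"ucker identities is determined by its edge entries, $B$ coincides with the character matrix of the standard matchgate whose omittable node carries only weight-$0$ edges, i.e.\ of a matchgate on an even number of nodes with no effective omittable node, which is even. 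In short, evenness is a consequence of reducibility, not something to be engineered gate by gate, and the argument you give for it would not survive scrutiny.
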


So far we have established the result for the first significant case
that a matchgate is applied to the first $k$ bits.

In the following lemma we consider two more cases:
\begin{itemize}
\item a gate  applied to consecutive
bits but not starting from the first bit,
\item a gate applied to nonconsecutive bits.
\end{itemize}

 For the first case, the gate must be an even or an odd gate, we
 observe
that only even and odd gates are used in the transformation for an
even or an odd gate. For the second case, we extend its matrix, and
replace it by a new  even gate which is  applied to consecutive bits
reducing it to the first case.

\begin{lem} \label{lem:univeral}
 For any $k>2$, and any $m$-bit matchcircuit $\Delta$ containing a  $k$-bit
matchgate $\Gamma$ with character matrix $A$, there is a level $k-1$
matchcircuit $\Lambda$ having the same character matrix as $\Delta$.
\end{lem}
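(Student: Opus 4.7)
The plan is to reduce every instance to the base case proved in Lemma \ref{lem:circuittransform}, namely a single $k$-bit matchgate $\Gamma$ sitting on the first $k$ bits of $\Delta$. For that base case, Lemma \ref{lem:circuittransform} rewrites the $k$-bit part of the circuit as an even reducible matchgate $B$ preceded and followed by invertible single and two-bit gates. Since $B$ is reducible, the bottom input/output pair is joined only by a weight-$1$ edge with no other incidences, so the corresponding matchgate factors as the parallel composition of a $(k-1)$-bit matchgate on bits $1,\ldots,k-1$ with a trivial identity wire on bit $k$. As the auxiliary factors added by Lemma \ref{lem:circuittransform} are already of level at most $2$, and $2 \le k-1$ whenever $k>2$, the overall replacement is a level-$(k-1)$ matchcircuit with the same character matrix as the original, by Theorem \ref{prop:closeundermutiple}.

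Next I would handle the case in which $\Gamma$ acts on consecutive bits $x_i,\ldots,x_{i+k-1}$ with $i>1$. By Theorem \ref{thm:valiantandcai}, such a $\Gamma$ is necessarily even or odd. The key step is to verify that each of the nine elementary gates $\Gamma_a,\ldots,\Gamma_i$ used in the transformation of Lemma \ref{lem:circuittransform} is itself even or odd, so that the decomposition built above for the first-$k$-bits case may be translated bodily to any consecutive block of $k$ bits. This parity check is routine but essential; once in hand, one merely reindexes the auxiliary gates to positions $i,\ldots,i+k-1$ and enforces the parallel-edge convention of Theorem \ref{thm:valiantandcai} (weight $-1$ above odd gates, weight $1$ otherwise) so that the composed character matrix continues to be the product of the constituents.

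For the case of a gate on nonconsecutive bits, the plan is to extend $A$ to a larger even matrix $A'$ acting on a consecutive block of $k'$ bits by tensoring with identities on the intervening positions. The extension is even (the identity factor is even, and tensor product preserves parity) and sits on consecutive bits, so the previous case applies to yield a level-$(k'-1)$ decomposition; absorbing the extended identities back into pass-through wires turns this into a level-$(k-1)$ replacement for the original $\Gamma$. Applying this replacement to each $k$-bit matchgate of $\Delta$ in turn, and composing the pieces inside the circuit via Theorem \ref{prop:closeundermutiple}, produces the desired level-$(k-1)$ matchcircuit $\Lambda$ with character matrix equal to that of $\Delta$.

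The main obstacle I expect is the parity audit of the elementary gates $\Gamma_a,\ldots,\Gamma_i$, since the translation argument for non-first-$k$-bits positions depends crucially on each being even or odd, together with the careful bookkeeping of parallel-edge signs required by Theorem \ref{thm:valiantandcai} so that matrix multiplication continues to track circuit composition. A secondary but manageable difficulty is the extend-and-contract step in the nonconsecutive case: one must check that the inserted identity wires are realizable as legitimate even matchgates cohering with the surrounding circuit, and that contracting them back does not introduce spurious matchings that would corrupt the Pfaffian Sum.
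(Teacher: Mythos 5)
Your overall route coincides with the paper's own plan: establish the first-$k$-bits case via Lemma \ref{lem:circuittransform} and the factorization of a reducible matchgate into a $(k-1)$-bit gate plus a trivial bottom wire, handle a consecutive block starting at $i>1$ by noting the gate must be even or odd and checking the parity of the elementary gates so the construction transports, and handle nonconsecutive bits by extending to an even gate on a consecutive block. (The paper defers the details to its full version, but this is exactly the sketch it gives.)

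There is, however, one genuine gap: you reduce ``every instance'' to Lemma \ref{lem:circuittransform}, but that lemma applies only to a \emph{nonsingular} $k$-bit matchgate, whereas Lemma \ref{lem:univeral} makes no nonsingularity assumption on $A$. The paper anticipates this: it introduces the elementary gates $\Gamma_h$ and $\Gamma_i$ precisely so that ``a nonzero singular character matrix will be transformed to a matchcircuit composed of only $\Gamma_h$-type gates'' and so that ``$\Gamma_i$ is used to realize zero matrix.'' Your proposal never addresses singular $A$, and the reducible-matchgate factorization you rely on is not available there (the transformation lemma uses nonsingularity, e.g.\ to find a nonzero edge entry in Phase T3). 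You would need a separate argument for singular and zero character matrices along the lines of the $\Gamma_h$/$\Gamma_i$ constructions before the lemma is fully proved. The remaining concerns you flag --- the parity audit of $\Gamma_a,\ldots,\Gamma_i$ and the sign bookkeeping for parallel edges over odd gates --- are the right ones and match the paper's stated strategy.
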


The proof for lemma \ref{lem:effectofgamma_c}-\ref{lem:univeral}
will be given in the full version.

\subsection{Proof of theorem \ref{thm:universalthm}}

\begin{proof} For (1). Repeat the process in lemma \ref{lem:univeral}  until there
is no gate of bit greater than $2$.

For (2). The number of matchgates used in the phases of
transformation are O$(k)$, O$(k^3)$, O$(k)$ and O$(k^2)$,
respectively, so a $k$-bit matchgate can be simulated by  O$(k^4)$
many single and two-bit matchgates. This procedure is polynomial
time computable, because there are polynomially many actions, and
each action is polynomial time computable due to the fact that we
compute only the edge entries.
\end{proof}

%


\begin{thebibliography}{00}


\bibitem{cc06c}
J.-Y. Cai, V. Choudhary, On the Theory of Matchgate Computations.
Electronic Colloquium on Computational Complexity (ECCC)(018),
(2006).


\bibitem{M00}
K. Murota, Matrices and Matroids for Systems Analysis, Springer,
Berlin, 2000.

\bibitem{qut}
Michael A. Nielsen and Isaac L. Chuang, Quantum Computation and
Quantum Information, Combridge University Press, 2000.


 \bibitem{val02}
 L. G. Valiant, Quantum circuits that can be simulated classically in
 polynomial time, SIAM Journal on Computing, {\bf 31} (2002)
 pp.~1229--1254.

  \bibitem{val03}
 L. G. Valiant, Expressiveness of Matchgates, Theoretical Computer
 Science, {\bf 281}  (2003) pp.~457--471.

 \bibitem{val04}
 L. G. Valiant, Holographic Algorithms (Extended Abstract), FOCS
2004, pp.~306--315.






\end{thebibliography}
\end{document}